\begin{document}

\newcommand {\ignore} [1] {}

\def \aa   {\alpha}
\def \bb   {\beta}
\def \gg   {\gamma}
\def \ee   {\varepsilon}
\def \el   {\ell}
\def \ss   {\sigma}
\def \dd   {\delta}
\def \Om   {\Omega}

\def \PP   {{\cal P}}
\def \QQ   {{\cal Q}}
\def \DD   {{\cal D}}
\def \NN   {{\cal N}}
\def \AA   {{\cal A}}
\def \MM   {{\cal M}}
\def \II   {{\cal I}}
\def \TT   {{\cal T}}
\def \RR   {{\cal R}}

\newcommand{\ie}{{\it i.e.}}
\newcommand{\eg}{{\it e.g.}}
\newcommand{\bu}{{\mathbf{u}}}
\newcommand{\bc}{{\mathbf{c}}}
\newcommand{\bx}{{\mathbf{x}}}
\newcommand{\by}{{\mathbf{y}}}
\newcommand{\bv}{{\mathbf{v}}}
\newcommand{\be}{{\mathbf{e}}}
\newcommand{\bt}{{\mathbf{t}}}
\newcommand{\bD}{{\mathbf{D}}}
\newcommand{\MWC}{{\texttt{Multiway Cut}}}
\newcommand{\NMW}{{\texttt{Node Multiway Cut}}}
\newcommand{\DMW}{{\texttt{Directed Multiway Cut}}}
\newcommand{\EXT}{{\texttt{$0$-Extension}}}
\newcommand{\VC}{{\texttt{Vertex Cover}}}
\newcommand{\ML}{{\texttt{Metric Labeling}}}
\newcommand{\UML}{{\texttt{Uniform Metric Labeling}}}
\newcommand{\MS}{{\texttt{Max SAT}}}
\newcommand{\RSet}{{\mathtt{R}}}
\newcommand{\opt}{{\mathtt{opt}}}
\newcommand{\argmin}{{\text{argmin}}}
\newcommand{\Cmax}{{c_{\text{max}}}}

\newcommand{\CC}{{\textsf{Correlation Clustering}}}
\newcommand{\MMA}{{\textsf{Max Min Agreements}}}
\newcommand{\MMD}{{\textsf{Min Max Disagreements}}}
\newcommand{\MLA}{{\textsf{Max Local Agreements}}}
\newcommand{\MLD}{{\textsf{Min Local Disagreements}}}
\newcommand{\MMST}{{\textsf{Min Max $s-t$ Cut}}}
\newcommand{\MMMW}{{\textsf{Min Max Multiway Cut}}}
\newcommand{\MMM}{{\textsf{Min Max Multicut}}}
\newcommand{\MC}{{\textsf{Max Cut}}}
\newcommand{\MST}{{\textsf{Min $s-t$ Cut}}}
\newcommand{\MW}{{\textsf{Multiway Cut}}}
\newcommand{\MultiC}{{\textsf{Multicut}}}

\newcommand{\RN}[1]{%
  \textup{\expandafter{\romannumeral#1}}%
}

\def\bx {{\bf x}}
\def\ba {{\bf a}}
\def\bb {{\bf b}}
\def\bd {{\bf d}}

\pagenumbering{arabic}

\makeatletter
\renewcommand*{\@fnsymbol}[1]{\ensuremath{\ifcase#1\or \dagger\or \ddagger\or \ddagger\or
    \mathsection\or \mathparagraph\or \|\or **\or \dagger\dagger
    \or \ddagger\ddagger \else\@ctrerr\fi}}
\makeatother

\title{Local Guarantees in Graph Cuts and Clustering}

\author{Moses Charikar\inst{1} \fnmsep  \thanks{Supported by NSF grants CCF-1617577, CCF-1302518 and a Simons
Investigator Award},  Neha Gupta\inst{1} \fnmsep $^\dagger$   \and Roy Schwartz\inst{2} \fnmsep \thanks{Supported by ISF grant 1336/16} }

\institute{Stanford University, Stanford CA 94305, USA,\\
\email{\{moses,nehagupta\}@cs.stanford.edu},
\and
Technion, Haifa, 3200003, Israel,\\
\email{schwartz@cs.technion.ac.il}
}

\maketitle

\begin{abstract}
{\CC} is an elegant model that captures fundamental graph cut problems such as {\MST}, {\MW}, and {\MultiC}, extensively studied in combinatorial optimization.
Here, we are given a graph with edges labeled $+$ or $-$ and the goal is to produce a clustering that agrees with the labels as much as possible: 
$+$ edges within clusters and $-$ edges across clusters.
The classical approach towards {\CC} (and other graph cut problems) is to optimize a global objective.
We depart from this and study local objectives: minimizing the maximum number of disagreements for edges incident on a single node, and the analogous max min agreements objective.
This naturally gives rise to a family of basic min-max graph cut problems. 
A prototypical representative is {\MMST}: find an $s-t$ cut minimizing the largest number of cut edges incident on any node.
We present the following results: $(1)$ an $O(\sqrt{n})$-approximation for the problem of minimizing the maximum total weight of disagreement edges incident on any node (thus providing the first known approximation for the above family of min-max graph cut problems), $(2)$ a remarkably simple $7$-approximation for minimizing local disagreements in complete graphs (improving upon the previous best known approximation of $48$), and $(3)$ a $\nicefrac[]{1}{(2+\varepsilon)}$-approximation for maximizing the minimum total weight of agreement edges incident on any node, hence improving upon the $\nicefrac[]{1}{(4+\varepsilon)}$-approximation that follows from the study of approximate pure Nash equilibria in cut and party affiliation games.

\vspace{5pt}
\noindent {\bf Keywords:} Approximation Algorithms, Graph Cuts, Correlation Clustering, Linear Programming
\end{abstract}

\thispagestyle{empty}

\section{Introduction}\label{sec:Introduction}
Graph cuts are extensively studied in combinatorial optimization, including fundamental problems such as {\MST}, {\MW}, and {\MultiC}.
Typically, given an undirected graph $G=(V,E)$ equipped with non-negative edge weights $c:E\rightarrow \mathcal{R}_+$ the goal is to find a {\em constrained} partition $\mathcal{S}=\left\{ S_1,\ldots, S_{\ell}\right\}$ of $V$ minimizing the total weight of edges crossing between different clusters of $\mathcal{S}$.
e.g., in {\MST}, $\mathcal{S}$ has two clusters, one containing $s$ and the other containing $t$.
Similarly, in {\MW}, $\mathcal{S}$ consists of $k$ clusters each containing exactly one of $k$ given special vertices $t_1,\ldots,t_k$.
In {\MultiC}, the clusters of $\mathcal{S}$ must separate $k$ given pairs of special vertices $\left\{ s_i,t_i\right\} _{i=1}^k$.

The elegant model of {\CC} captures all of the above fundamental graph cut problems, and was first introduced by Bansal {\em et al.} \cite{bansal2004correlation} more than a decade ago.
In {\CC}, 
we are given an  undirected graph $G=(V,E)$ equipped with non-negative edge weights $c:E\rightarrow \mathcal{R} _+$.
Additionally, $E$ is partitioned into $E^+$ and $E^-$, where edges in $E^+$ ($E^-$) are considered to be labeled as $+$ ($-$).
The goal is to find a partition of $V$ into an {\em arbitrary} number of clusters $\mathcal{S}=\left\{ S_1,\ldots,S_{\ell}\right\}$ that agrees with the edges' labeling as much as possible:
the endpoints of $+$ edges are supposed to be placed in the same cluster and endpoints of $-$ edges in different clusters.
Typically, the objective is to find a clustering that minimizes the total weight of misclassified edges. 
This models, \eg, {\MST}, since one can label all edges in $G$ with $+$, and add $(s,t)$ to $E$ with a label of $-$ and set its weight to $c_{s,t}=\infty$ ({\MW} and {\MultiC} are modeled in a similar manner).

{\CC} has been studied extensively for more than a decade \cite{ailon2012improved,ailon2008aggregating,charikar2003clustering,chawla2015near,demaine2006correlation,Wirth10}.
In addition to the simplicity and elegance of the model, its study is also motivated by a wide range of practical applications:
image segmentation \cite{Wirth10}, clustering gene expression patterns \cite{amit2004bicluster,ben1999clustering}, cross-lingual link detection \cite{van2007correlation}, and the aggregation of inconsistent information \cite{filkov2004integrating}, to name a few (refer to the survey \cite{Wirth10} and the references therein for additional details).

Departing from the classical global objective approach towards {\CC}, we consider a broader class of objectives that allow us to bound the number of misclassified edges incident on any node (or alternatively edges classified correctly).
We refer to this class as {\CC} with {\em local guarantees}.
First introduced by Puleo and Milenkovic \cite{puleo2016correlation}, {\CC} with local guarantees naturally arises in settings such as community detection without antagonists, \ie, objects that are inconsistent with large parts of their community,
and has found applications in diverse areas, \eg, recommender systems, bioinformatics, and social sciences \cite{cheng2000biclustering,kriegel2009clustering,puleo2016correlation,symeonidis2006nearest}.



\vspace{5pt}
\noindent {\bf{Local Minimization of Disagreements and Graph Cuts:}}
A prototypical example when considering minimization of disagreements with local guarantees is the {\MMD} problem, whose goal is to find a clustering that minimizes the maximum total weight of misclassified edges incident on any node.

Formally, given a partition $\mathcal{S}=\left\{S_1,\ldots,S_{\ell}\right\}$ of $V$,
for $u\in S_i$, define:
$$\text{disagree}_{\mathcal{S}}(u)\triangleq \sum _{v\notin S_i:(u,v)\in E^+}c_{u,v} + \sum _{v\in S_i:(u,v)\in E^-}c_{u,v}~.$$
The objective of {\MMD} is: $\min _{\mathcal{S}}\max _{u\in V}\left\{ \text{disagree}_{\mathcal{S}}(u)\right\}$.
This is NP-hard even on complete unweighted graphs and approximations are known for only a few special cases \cite{puleo2016correlation}.
No approximation is known for general graphs.

Just as minimization of total disagreements in {\CC} models fundamental graph cut problems, {\MMD} gives rise to a variety of basic min-max graph cut problems.
A natural problem here is {\MMST}:
Its input is identical to that of {\MST}, however its objective is to find an $s-t$ cut $(S,\overline{S})$ minimizing the total weight of cut edges incident on any node: $ \min _{S\subseteq V:s\in S, t\notin S}\max _{u\in V}\{\sum _{v:(u,v)\in \delta(S)}c_{u,v}\}$.\footnote{$\delta(S)$ denotes the collection of edges crossing the cut $(S,\overline{S})$.}
Despite the fact that {\MMST} is a natural graph cut problem, no approximation is known for it.
{\MMD} also gives rise to {\MMMW} and {\MMM}, defined similarly; 
no approximation is known for these. 
One of our goals is to highlight this family of min-max graph cut problems which we believe deserve further study.
Other graph cut problems were studied from the min-max perspective, \eg, \cite{bansal2014min,svitkina2004min}.
However, the goal there is to find a constrained partition that minimizes the total weight of cut edges incident on any {\em cluster} (as opposed to incident on any {\em node}).


{\MMD} is a special case of the more general {\MLD} problem.
Given a clustering $\mathcal{S}$, consider the vector of all disagreement values $\text{disagree}_{\mathcal{S}}(V)\in \mathcal{R}_+^V$, where $(\text{disagree}_{\mathcal{S}}(V)) _u = \text{disagree}_{\mathcal{S}}(u) $ $\forall u\in V$. 
The objective of {\MLD} is to find a partition $\mathcal{S}$ that minimizes $f( \text{disagree}_{\mathcal{S}}(V))$ for a given function $f$.
For example, if $f$ is the $\max$ function {\MLD} reduces to {\MMD}, and if $f$ is the summation function {\MLD} reduces to the classic objective of minimizing total disagreements.


\vspace{5pt}
\noindent {\bf{Local Maximization of Agreements:}}
Another natural objective of {\CC} is that of maximizing the total weight of edges correctly classified \cite{bansal2004correlation,swamy2004correlation}.
A prototypical example for local guarantees is {\MMA}, 
\ie \text{ }  finding
a clustering that maximizes the minimum total weight of correctly classified edges incident on any node.
Formally, given a partition $\mathcal{S}=\{S_1,\ldots,S_{\ell}\}$ of $V$,
for $u \in S_i$, define:
$$\text{agree}_{\mathcal{S}}(u)\triangleq \sum _{v\in S_i:(u,v)\in E^+}c_{u,v} + \sum _{v\notin S_i:(u,v)\in E^-}c_{u,v}~.$$
The objective of {\MMA} is:  $ \max _{\mathcal{S}}\min _{u\in V}\{ \text{agree}_{\mathcal{S}}(u)\}$.

This is a special case of the more general {\MLA} problem.
Given a clustering $\mathcal{S}$, consider the vector of all agreement values $\text{agree}_{\mathcal{S}}(V)\in \mathcal{R}_+^V$, where $( \text{agree}_{\mathcal{S}}(V))_u=\text{agree}_{\mathcal{S}}(u)$ $\forall u\in V$. 
The objective of {\MLA} is to find a partition $\mathcal{S}$ that maximizes $g( \text{agree}_{\mathcal{S}}(V))$ for a given function $g$.
For example, if $g$ is the $\min$ function {\MLA} reduces to {\MMA}, and if $g$ is the summation function {\MLA} reduces to the classic objective of maximizing total agreements.


{\MLA} is closely related to the computation of local optima for {\MC}, and the computation of pure Nash equilibria in cut and party affiliation games \cite{balcan2009improved,bhalgat2010approximating,christodoulou2006convergence,fabrikant2004complexity,schaffer1991simple} (a well studied special class of potential games \cite{monderer1996potential}).
In the setting of party affiliation games, each node of $G$ is a player that can choose one of two sides of a cut.
The player's payoff is the total weight of edges incident on it that are classified correctly.
It is well known that such games admit a pure Nash equilibria via the {\em best response dynamics} (also known as {\em Nash dynamics}), and that each such pure Nash equilibrium is a $\left(\nicefrac[]{1}{2}\right)$-approximation for {\MLA}.
Unfortunately, in general the computation of a pure Nash equilibria in cut and party affiliation games is PLS-complete \cite{johnson1988easy}, and thus it is widely believed no polynomial time algorithm exists for solving this task.
Nonetheless, one can apply the algorithm of Bhalgat {\em et al.} \cite{bhalgat2010approximating} for finding an approximate pure Nash equilibrium and obtain a $\nicefrac[]{1}{(4+\varepsilon)}$-approximation for {\MLA} (for any constant $\varepsilon > 0$).
This approximation is also the best known for the special case of {\MMA}.


\vspace{5pt}
\noindent {\bf{Our Results:}}
Focusing first on {\MMD} on general graphs we prove that both the natural LP and SDP relaxations admit a large integrality gap of $\nicefrac[]{n}{2}$. 
Nonetheless, we present an $O(\sqrt{n})$-approximation for {\MMD}, bypassing the above integrality gaps.


\begin{theorem}\label{thrm:IntegralityGapMMD}
The natural LP and SDP relaxations for {\MMD} have an integrality gap of $\nicefrac[]{n}{2}$.
\end{theorem}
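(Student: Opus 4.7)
The plan is to exhibit a single instance of {\MMD} on $n$ vertices where every integer clustering incurs max-per-node disagreement at least $1$, yet both the natural LP and SDP admit fractional solutions of value $O(1/n)$. Concretely, take $V=\{v_1,\ldots,v_n\}$, place a $+$-edge of unit weight between each consecutive pair $v_i,v_{i+1}$, and add one $-$-edge $(v_1,v_n)$ of weight $M$ sufficiently large (say $M=n$) to force $v_1$ and $v_n$ into distinct clusters in any integer solution.

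For the LP upper bound, set $x_{v_i v_j}:=\min\{|i-j|/(n-1),\,1\}$. The triangle inequalities $x_{ab}\le x_{ac}+x_{cb}$ follow immediately from $|i-j|\le|i-k|+|k-j|$. Since $x_{v_1 v_n}=1$, the $-$-edge contributes $M(1-x_{v_1 v_n})=0$ to both its endpoints; each interior vertex $v_i$ is incident to two $+$-edges each contributing $1/(n-1)$, so its fractional disagreement is $2/(n-1)$, while the two endpoints pay $1/(n-1)$. Hence the LP optimum is at most $2/(n-1)$. For the SDP, I would embed $u_{v_i}=(\cos\phi_i,\sin\phi_i)$ on a quarter arc with $\phi_i=(i-1)\pi/(2(n-1))$, which makes $u_{v_1}\cdot u_{v_n}=0$ (zeroing the $-$-edge contribution) and $1-u_{v_i}\cdot u_{v_{i+1}}=O(1/n^2)$, yielding an SDP objective at most $O(1/n^2)$.

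For the integer lower bound, any clustering must separate $v_1$ and $v_n$, else the $-$-edge alone inflicts $M\gg 1$ disagreement on each. Hence some consecutive edge $(v_k,v_{k+1})$ is cut, inflicting unit disagreement on both $v_k$ and $v_{k+1}$. This gives integer optimum at least $1$ and hence an integrality gap of at least $(n-1)/2$, matching the claimed $n/2$ bound up to boundary; exactly $n/2$ is obtained by either passing to an $(n+1)$-vertex variant or a cyclic version with suitable labeling.

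The main obstacle I expect is verifying the SDP bound when the paper's SDP additionally imposes $\ell_2^2$-triangle inequalities, in which case the two-dimensional arc embedding above must be replaced by a higher-dimensional embedding of the LP distances. Fortunately path metrics embed isometrically in $\ell_2^2$ (they are tree metrics), so such an embedding exists; the only nontrivial step is writing it out and checking that no additional SDP constraint blocks the small objective value.
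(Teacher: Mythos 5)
Your instance and the LP half of your argument are essentially the paper's: the paper takes the unweighted $n$-cycle with one $-$ edge, assigns length $\frac{1}{n}$ to each $+$ edge and $1-\frac{1}{n}$ to the $-$ edge, and obtains fractional value $\frac{2}{n}$ against an integral optimum of $1$, i.e.\ a gap of exactly $\frac{n}{2}$ (your path version loses only the boundary factor you already flag, and the ``cyclic version'' you mention \emph{is} the paper's construction). The genuine gap in your proposal is the SDP half. The quarter-arc embedding $\by_{v_i}=(\cos\phi_i,\sin\phi_i)$ is infeasible for the natural SDP, which imposes the $\ell_2^2$ triangle inequalities: for three consecutive points with equal angular gaps $\epsilon$, the constraint $\|\by_{v_1}-\by_{v_2}\|_2^2+\|\by_{v_2}-\by_{v_3}\|_2^2\geq\|\by_{v_1}-\by_{v_3}\|_2^2$ reads $2(2-2\cos\epsilon)\geq 2-2\cos 2\epsilon=2(1-\cos\epsilon)(1+\cos\epsilon)\cdot 2/2$, which after dividing by $2(1-\cos\epsilon)$ becomes $1\geq 1+\cos\epsilon$, false for every $\epsilon<\frac{\pi}{2}$ (three nearly collinear points always have an obtuse angle at the middle one). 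So the claimed $O(1/n^2)$ SDP value is not attainable. Your fallback --- ``path metrics are tree metrics, hence embed isometrically into $\ell_2^2$'' --- does not close the hole either: the metric you actually need is the \emph{truncated} line metric $\min\{|i-j|/(n-1),1\}$, and, more importantly, the natural SDP also requires unit vectors lying in a single orthant, i.e.\ you must produce a PSD matrix $Y$ with $Y_{ii}=1$, $Y_{ij}\geq 0$, and $1-Y_{ij}$ equal to your distances; isometric embeddability into $\ell_2^2$ alone supplies none of these extra properties.

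The missing idea --- and the paper's route --- is to build the SDP solution as a convex combination of Gram matrices of \emph{integral} clusterings: take the $n-1$ prefix cuts $\{v_1,\dots,v_i\}\cup\{v_{i+1},\dots,v_n\}$ (mapping the two clusters to $\be_1$ and $\be_2$) together with the single-cluster solution, and let $\overline{Y}$ be their uniform average. Each summand is feasible, and all the SDP constraints (unit diagonal, nonnegativity, and the $\ell_2^2$ triangle inequalities, which are linear in $Y$ once the diagonal is fixed) are preserved under averaging, so $\overline{Y}$ is feasible; it gives $\by_{v_i}\cdot\by_{v_{i+1}}=\frac{n-1}{n}$ and $\by_{v_1}\cdot\by_{v_n}=\frac{1}{n}$, hence $D(v_i)=\frac{2}{n}$ for every $i$ and the same gap of $\frac{n}{2}$. (Equivalently, one can verify directly that the triangular kernel $Y_{ij}=\max\{0,1-|i-j|/(n-1)\}$ is PSD, but the convex-combination argument is the one-line fix and sidesteps that verification.)
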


\begin{theorem}\label{thrm:sqrtMMD}
{\MMD} admits an $O(\sqrt{n})$-approximation for general weighted graphs.
\end{theorem}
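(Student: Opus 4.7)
Since Theorem \ref{thrm:IntegralityGapMMD} rules out direct LP or SDP rounding, I would pursue a combinatorial approach organized around a vertex-weight dichotomy, driven by a guessed value of the optimum. First, binary search for a parameter $\tau$ equal to $\opt$ within constant factors. Given such $\tau$, the aim is to output a clustering with local disagreement $O(\sqrt{n})\cdot\tau$ at every vertex, or to certify that $\tau$ is too small.

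For each $u\in V$, let $p_u=\sum_{v:(u,v)\in E^+}c_{u,v}$. Call $u$ \emph{light} if $p_u\le \sqrt{n}\,\tau$ and \emph{heavy} otherwise. The key observation is that the all-singleton clustering produces disagreement exactly $p_u$ at each $u$, so when no heavy vertex exists we are done. When heavy vertices exist, any $\tau$-feasible clustering must place each heavy $u$ in a cluster whose inner $+$-weight is at least $p_u-\tau=\Omega(\sqrt{n}\,\tau)$, which gives a forced structural target. I would build a cluster $C_u$ around each heavy vertex by absorbing those vertices with sufficiently large $+$-weight connection to the partial cluster, processing vertices in a fixed order so that once a vertex is committed it is not moved again. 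Light vertices never absorbed by any heavy cluster remain singletons.

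Disagreement at a light singleton equals $p_u\le\sqrt{n}\,\tau$, which is already fine. The main obstacle is bounding disagreement at a heavy vertex $u$ in its cluster $C_u$, which decomposes into (i) $+$ edges leaving $C_u$ and (ii) $-$ edges inside $C_u$. Part (i) is controlled by the absorption rule: any $+$-neighbor not in $C_u$ must have small connection weight, which combined with the structural target bounds the leaking $+$-weight. Part (ii) is the delicate part: one must show the total $-$ weight inside $C_u$ is $O(\sqrt{n})\tau$. I expect the $\sqrt{n}$ factor to arise from a balance argument of the following form. Either $|C_u|\le \sqrt{n}$, in which case the internal $-$-weight is controlled by counting edge endpoints against the $\tau$-feasibility of each member of $C_u$; or $|C_u|>\sqrt{n}$, but then the absorption rule together with $\tau$-feasibility at the other members of $C_u$ globally caps the total $-$ weight entering $C_u$, permitting an amortized bound. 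Calibrating the absorption threshold, the processing order, and the light/heavy split so that (i) and (ii) simultaneously come in at $O(\sqrt{n})\,\tau$—particularly when a single vertex is contested by multiple heavy seeds—is, I anticipate, the crux of the argument.
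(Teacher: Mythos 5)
Your plan diverges from the paper's (which combines the LP relaxation with a combinatorial lower bound and a layered clustering of an auxiliary graph), and as written it has a genuine gap: it never addresses individual edges whose weight vastly exceeds $\opt$. The central difficulty in {\MMD} is that a single $-$ edge $(w,w')$ of weight, say, $n^{10}\tau$ may be \emph{correctly} classified by the optimal clustering (its endpoints separated), so it contributes nothing to $\opt$; yet your absorption rule, which is driven purely by $+$-weight connectivity to the growing cluster, can place both $w$ and $w'$ inside the same $C_u$. At that point no charging ``against the $\tau$-feasibility of each member of $C_u$'' is possible, because the edge costs OPT nothing, and a bound on $|C_u|$ or on the \emph{number} of internal $-$ edges does not convert into a bound on their \emph{weight}. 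This is precisely why the paper introduces the second lower bound $\Cmax$ (the maximum weight of an edge misclassified by some optimal solution), adds hard constraints forcing every edge heavier than $\Cmax$ to be classified correctly by the fractional metric, and designs the rounding so that such edges are never misclassified; only then does a count of $O(\sqrt{n})$ misclassified edges per vertex, each of weight at most $\Cmax\le\opt$, yield the $O(\sqrt{n})\cdot\opt$ weight bound. Your proposal has no analogous protection mechanism, and the singleton/absorption dichotomy does not supply one.

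A secondary unresolved point is part (i): you assert that the leaking $+$-weight at a heavy vertex $u$ is controlled because each unabsorbed $+$-neighbor has connection weight below the absorption threshold $\theta$, but there may be up to $n$ such neighbors, giving only $n\theta$; forcing $\theta=O(\tau/\sqrt{n})$ to fix this makes absorption extremely aggressive and worsens the internal $-$-weight problem above, and the ``structural target'' $p_u-\tau$ concerns OPT's cluster rather than $C_u$, so it does not directly bound what your algorithm cuts. Since you yourself flag the calibration of the threshold and the amortization for large $|C_u|$ as the unproven crux, the proposal should be regarded as an incomplete plan rather than a proof; to complete it along these lines you would at minimum need to import something playing the role of $\Cmax$ and a guarantee that no edge heavier than it is ever misclassified.
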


\noindent Since {\MMST}, along with {\MMMW} and {\MMM}, are a special case of {\MMD}, Theorem \ref{thrm:sqrtMMD} applies to them as well, thus providing the first known approximation for this family of cut problems.\footnote{Theorem \ref{thrm:IntegralityGapMMD} can be easily adapted to apply also for {\MMST}, {\MMMW}, and {\MMM}, resulting in a gap of $\nicefrac[]{(n-1)}{2}$.}

When considering the more general {\MLD} problem, we present a remarkably simple approach that achieves an improved approximation of $7$ for both complete graphs and complete bipartite graphs (where disagreements are measured w.r.t one side only). 
This improves upon and simplifies \cite{puleo2016correlation} who presented an approximation of $48$ for the former and $10$ for the latter.
\begin{theorem}\label{thrm:7ApproxCliqueMLD}
{\MLD} admits a $7$-approximation for complete graphs.\\
where $f$ is required to satisfy the following three conditions:
$(1)$ for any $\bx, \by\in \mathcal{R}_+^V$ if $\bx \leq \by$ then $f(\bx)\leq f(\by)$ (monotonicity), $(2)$ $f(\alpha \bx)\leq \alpha f(\bx)$ for any $\alpha \geq 0$ and $\bx\in \mathcal{R}_+^V$ (scaling), and $(3)$ $f$ is convex.

\end{theorem}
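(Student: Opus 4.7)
The strategy I would follow is to pair a convex relaxation with a deterministic pivot-based rounding whose cost is charged to the LP on a \emph{per-vertex} basis, and then to use monotonicity and scaling of $f$ to lift the pointwise guarantee to the global objective. I would use the standard pairwise correlation clustering variables $x_{uv}\in[0,1]$ with $x_{uu}=0$ and triangle inequality on every triple, and add per-vertex variables $z_u$ constrained by
\[
z_u\;\ge\;\sum_{v:(u,v)\in E^+} c_{uv}\,x_{uv}\;+\;\sum_{v:(u,v)\in E^-} c_{uv}\,(1-x_{uv})\qquad\forall u\in V.
\]
Minimizing $f(z)$ over this polytope is a convex program (since $f$ is convex and the other constraints are linear), so it is solvable in polynomial time, and because any integer clustering induces a feasible primal with the $z_u$ inequalities tight, one gets $f(z^*)\le\mathsf{OPT}$.

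Round the fractional optimum $(x^*,z^*)$ by a short greedy procedure: while there are unclustered vertices, pick a pivot $u$ from the remaining set $R$ by a simple rule tied to $z^*$ (for instance, minimum $z_u^*$), output the cluster $C(u)=\{v\in R : x_{uv}^*\le r\}$ for an absolute constant $r\in(0,\tfrac12)$, delete its vertices, and recurse. The technical heart of the argument is a pointwise bound: for every vertex $w$, $\mathrm{disagree}_{\mathcal{S}}(w)\le 7\,z_w^*$. Once this is in hand, monotonicity of $f$ yields $f(\mathrm{disagree}_{\mathcal{S}}(V))\le f(7z^*)$, the scaling property applied with $\alpha=7$ yields $f(7z^*)\le 7\,f(z^*)$, and the LP lower bound $f(z^*)\le\mathsf{OPT}$ completes the proof.

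To establish the pointwise bound I would let $u$ denote the pivot of $w$'s cluster and decompose $\mathrm{disagree}_{\mathcal{S}}(w)$ into three buckets: (i) $E^-$ edges inside $C(u)$, where the triangle inequality yields $x_{wv}^*\le 2r$ and hence $c_{wv}\le\tfrac{1}{1-2r}\,c_{wv}(1-x_{wv}^*)$, charging directly to $z_w^*$; (ii) $E^+$ edges whose other endpoint $v$ is still unclustered, where $x_{uv}^*>r$ and a triangle-inequality bound lower bounds $x_{wv}^*$ by a constant, charging $c_{wv}$ to the $+$ term in $z_w^*$; and (iii) $E^+$ edges whose other endpoint $v$ was removed earlier with some pivot $u'$, the delicate case, where the pivot rule yields $z_{u'}^*\le z_w^*$ and the fact that $w\notin C(u')$ gives $x_{u'w}^*>r$, enabling a charge through the LP mass present at $u'$ that can then be re-routed to $z_w^*$. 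Tuning $r$ to optimize the combined inequality is what yields the constant $7$, and the same scheme, run on the $+$-side of the bipartition, handles complete bipartite graphs with one-sided disagreements.

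The main obstacle will be bucket (iii): pure ball-cutting leaves no triangle-inequality slack once $v$ has been deleted in a previous round, which is essentially the same phenomenon that drives the large integrality gap for local objectives with unaided LP relaxations. The resolution is the pivot rule itself: by selecting pivots in order of smallest LP disagreement, the LP mass that is already paid for at $u'$ can be re-routed to pay for the edge at $w$, and one only has to verify that this rerouting loses a constant factor. Everything else is bookkeeping against the triangle inequality and the LP constraints, so verifying case (iii) quantitatively is where the proof effort will concentrate.
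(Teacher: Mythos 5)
Your high-level architecture coincides with the paper's: a convex relaxation with per-vertex fractional disagreement variables, a deterministic ball-cutting rounding, a pointwise guarantee $\mathrm{disagree}_{\mathcal{S}}(w)\le 7z^*_w$, and the lift to general $f$ via monotonicity and scaling (that last step is verbatim the paper's proof of the theorem from its Lemma~\ref{lem:7ApproxClique}). The gap is in the rounding and its analysis, which is the only hard part. Two concrete problems. In your bucket (ii), the claim that the triangle inequality lower bounds $x^*_{wv}$ by a constant is false for a single-radius ball: if $x^*_{uw}=r$ and $x^*_{uv}=r+\epsilon$, then $x^*_{wv}$ can be as small as $\epsilon$, so a $+$ edge from a vertex just inside the cluster to a vertex just outside is cut while carrying essentially no LP mass at $w$. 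In bucket (iii), the rerouting does not close: for each earlier pivot $u'$ whose cluster swallowed some cheap $+$ neighbours of $w$, the argument ``each such neighbour $v$ forces $x^*_{u'v}$ to be bounded below, hence contributes a constant to $z^*_{u'}\le z^*_w$'' bounds the damage from \emph{that one cluster} by $O(z^*_w)$; but $w$ may have cheap cut $+$ edges into many different earlier clusters, and the min-$z^*$ pivot rule gives you no mechanism to bound the number of such clusters by a constant. Summing your per-pivot bound over them yields something like $O(z^*_w)$ times the number of offending pivots, not $O(z^*_w)$.

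The paper resolves exactly these two issues with a different pivot rule and two radii: the pivot is $s^*=\text{argmax}\{|\text{Ball}_S(s,1/7)|\}$ and the cluster is $\text{Ball}_S(s^*,3/7)$. The greedy (max-small-ball) choice gives the counting inequality $|B|\le|A|$, where $A=\text{Ball}_S(s^*,1/7)$ and $B$ is the set of vertices within distance $1/7$ of $w$ lying on the wrong side of the cut; each cheap cut $+$ edge $(w,v_B)$ is charged to a \emph{distinct} edge $(w,v_A)$, which exists because the graph is complete and which, by the triangle inequality, has $d(w,v_A)$ bounded away from both $0$ and $1$, so it contributes at least a constant (at worst $1/7$) to $D(w)$ whichever sign it has. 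Crucially, every edge incident on $w$ is classified in exactly one iteration and charged only in that iteration, and all charges land on $w$'s own budget $D(w)$, so there is no cross-iteration or cross-vertex accumulation. To repair your proof you would essentially have to import this: replace the min-$z^*$ pivot by the max-small-ball pivot, use an inner radius for counting and an outer radius for cutting, and invoke completeness of the graph to guarantee the charging targets exist.
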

\begin{theorem}\label{thrm:7ApproxBipartiteMLD}
{\MLD} admits a $7$-approximation for complete bipartite graphs where disagreements are measured w.r.t. one side of the graph.
where $f$ is required to satisfy the following three conditions:
$(1)$ for any $\bx, \by\in \mathcal{R}_+^V$ if $\bx \leq \by$ then $f(\bx)\leq f(\by)$ (monotonicity), $(2)$ $f(\alpha \bx)\leq \alpha f(\bx)$ for any $\alpha \geq 0$ and $\bx\in \mathcal{R}_+^V$ (scaling), and $(3)$ $f$ is convex.
\end{theorem}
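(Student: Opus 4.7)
The plan is to mirror the argument for Theorem~\ref{thrm:7ApproxCliqueMLD}, adapted to the bipartite structure. Let $G=(L\cup R,E^+\cup E^-)$ be the given complete bipartite graph, and, without loss of generality, suppose that disagreements are measured with respect to $L$.

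\emph{Step 1 (LP).} I would use the standard pseudo-metric relaxation: variables $x_{uv}\in[0,1]$ for all (unordered) pairs in $L\cup R$, with $x_{uu}=0$ and triangle inequality constraints $x_{uv}\le x_{uw}+x_{wv}$. For $u\in L$ define the fractional disagreement
\[
d_u^{*}\ =\ \sum_{v\in R:\,(u,v)\in E^+} c_{uv}\,x_{uv}\ +\ \sum_{v\in R:\,(u,v)\in E^-} c_{uv}\,(1-x_{uv}),
\]
and minimize $f(d^{*})$. Since $f$ is convex and each $d_u^{*}$ is affine in $x$, this is a convex program whose value lower-bounds $\opt$.

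\emph{Step 2 (Pivot-based rounding).} I would drive the pivoting from $L$: while $L$ contains an unclustered vertex, pick a pivot $u\in L$ and form a cluster containing $u$, every unclustered $v\in R$ with $x_{uv}\le\nicefrac{1}{3}$, and every unclustered $u'\in L$ with $x_{uu'}\le\nicefrac{1}{3}$ (using the same-side LP values, which are forced to be meaningful through the triangle inequality with $R$-vertices). Unclustered $R$-vertices at the end can be placed in singleton clusters, since their disagreement does not enter the objective.

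\emph{Step 3 (Pointwise guarantee).} The crux is to show $\text{disagree}_{\mathcal{S}}(u)\le 7\,d_u^{*}$ for every $u\in L$, so that by monotonicity and scaling of $f$ we obtain
\[
f\!\left(\text{disagree}_{\mathcal{S}}(L)\right)\ \le\ f(7\,d^{*})\ \le\ 7\,f(d^{*})\ \le\ 7\,\opt.
\]
The analysis splits by whether $u$ is a pivot or an absorbed vertex. For a $+$-edge $(u,v)$ cut by $\mathcal{S}$, either the relevant $x$-distance already exceeds $\nicefrac{1}{3}$ (paying a factor of at most $3$ against $c_{uv}\,x_{uv}$), or triangle inequality through the appropriate pivot lower-bounds $x_{uv}$ by a constant. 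For a $-$-edge $(u,v)$ swallowed into a common cluster, triangle inequality through the pivot gives $x_{uv}\le\nicefrac{2}{3}$, so $1-x_{uv}\ge\nicefrac{1}{3}$ and the LP already pays a constant fraction of $c_{uv}$. Summing these contributions recovers the $7$-factor in the clique proof.

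\emph{Main obstacle.} The hardest part is the analysis for an absorbed vertex $u'\in L$: the cluster now contains $R$-vertices admitted because of the pivot $u$, not $u'$, so the $-$-disagreements of $u'$ must be charged by composing two triangle inequalities $u'\!\to\!u\!\to\!v$, and symmetric care is needed when $u'$ is separated from some $v\in R$ only because $v$ failed the pivot's threshold (even though $x_{u'v}$ may itself be small). Carrying these cross-cases through without inflating the constant beyond $7$, and verifying that the same-side LP variables $x_{u u'}$ behave as needed under the bipartite triangle-inequality closure, is the technical heart of the proof; every other step is a routine translation of the complete-graph argument.
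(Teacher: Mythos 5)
Your overall architecture (the convex relaxation over a pseudo-metric, a deterministic ball-cutting procedure driven from the side $L$ on which disagreements are measured, leftover $R$-vertices in singletons, and the final $f(\text{disagree}_{\mathcal{S}}(L))\le f(7\bD)\le 7f(\bD)$ chain via monotonicity and scaling) matches the paper. But there is a genuine gap at the heart of Step 3, and you have in fact flagged it yourself without resolving it. With an arbitrarily chosen pivot $p$ and a fixed threshold, consider a $+$ edge $(u,v)$ with $u\in L$, $v\in R$, where $x_{pv}$ is just below the threshold (so $v$ is absorbed) and $x_{pu}$ is just above it (so $u$ is not). The triangle inequality only gives $x_{uv}\ge x_{pu}-x_{pv}$, which can be arbitrarily close to $0$; the LP pays essentially nothing for this cut edge, and there is no constant-factor charge available against $c_{uv}x_{uv}$. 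Your claim that ``triangle inequality through the appropriate pivot lower-bounds $x_{uv}$ by a constant'' is simply false in this boundary regime, and for a per-vertex (min-max style) guarantee you cannot average this loss away as one does for the global objective.

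The missing idea in the paper is twofold. First, the pivot $s^*$ is not arbitrary: it is chosen \emph{greedily} to maximize the number of $V_2$-vertices within distance $\nicefrac[]{1}{7}$ of it. Second, for a fixed $u$ the problematic vertices $B$ (those within $\nicefrac[]{1}{7}$ of $u$ but on the wrong side of the cut $\text{Ball}_S(s^*,\nicefrac[]{3}{7})$) are matched \emph{injectively} into $A=\text{Ball}_S(s^*,\nicefrac[]{1}{7})\cap V_2$, which is possible precisely because the greedy choice guarantees $|B|\le|A|$; the unit cost of each misclassified edge $(u,v_B)$ is then charged to the fractional contribution of the partner edge $(u,v_A)$, which the triangle inequality shows is at least a constant ($d(u,v_A)\ge\nicefrac[]{1}{7}$ or $1-d(u,v_A)\ge\nicefrac[]{3}{7}$ depending on its sign). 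Without the greedy pivot and this cross-edge charging, no choice of radii rescues the argument, so as written your proof does not go through. (The paper also uses radii $\nicefrac[]{1}{7}$ and $\nicefrac[]{3}{7}$ rather than $\nicefrac[]{1}{3}$, and these are tuned so that all the case-by-case losses come out at most $7$.)
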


%


Focusing on local maximization of agreements, 
we present a $\nicefrac[]{1}{(2+\varepsilon)}$ approximation for {\MMA} without any assumption on the edge weights.
This improves upon the previous known $\nicefrac[]{1}{(4+\varepsilon)}$-approximation that follows from the computation of approximate pure Nash equilibria in party affiliation games \cite{bhalgat2010approximating}.
As before, we show that both the natural LP and SDP relaxations for {\MMA} have a large integrality gap of $\frac{n}{2(n-1)}$. 
\begin{theorem}\label{thrm:ApproxMMA}
For any $\varepsilon > 0$, {\MMA} admits a $\nicefrac[]{1}{(2+\varepsilon)}$-approximation for general weighted graphs, where the running time of the algorithm is $poly(n,\nicefrac[]{1}{\varepsilon})$. 
\end{theorem}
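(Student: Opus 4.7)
The plan is to reduce \MMA{} to computing an approximate pure Nash equilibrium in the natural party affiliation game on $G$, in which each node independently chooses one of two sides. The structural core is an elementary identity: for any $2$-clustering $(A,B)$ and any node $u$, if $p_A(u)$ and $p_B(u)$ denote $u$'s agreement value under placement in $A$ or $B$ respectively (with all other assignments fixed), then $p_A(u)+p_B(u) = \sum_{v} c_{u,v} =: W(u)$, so $\max\{p_A(u),p_B(u)\} \geq W(u)/2 \geq \text{OPT}(u)/2 \geq \text{OPT}_\MMA /2$. Consequently any $(1+\eta)$-approximate pure Nash equilibrium — in which no player improves payoff by more than a factor $1+\eta$ via a side switch — yields a $1/(2(1+\eta))$-approximation for \MMA{}. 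Setting $\eta := \varepsilon/2$ gives the bound $1/(2+\varepsilon)$. This explains the improvement over the prior $1/(4+\varepsilon)$ bound, which used only a constant-factor approximate equilibrium~\cite{bhalgat2010approximating}.

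The algorithmic task is thus to compute a $(1+\eta)$-approximate Nash equilibrium in $\mathrm{poly}(n,1/\varepsilon)$ time. I would binary-search over a geometrically spaced guess $T$ for $\text{OPT}_\MMA$, and for each guess run a restricted local search: repeatedly pick any node $u$ whose current payoff is below $T/(2+\varepsilon)$ and move it to the side attaining $\max\{p_A(u),p_B(u)\}$. Whenever $T\leq \text{OPT}_\MMA$ the identity above forces $W(u)\geq T$, so the chosen side yields payoff at least $T/2 > T/(2+\varepsilon)$; the ratio $(T/2)/(T/(2+\varepsilon))=(2+\varepsilon)/2=1+\eta$ then certifies that the move is genuinely $(1+\eta)$-improving. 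Each such move raises the potential $\Phi:=\sum_u \text{agree}(u)$, which equals twice the weight of correctly classified edges, by at least $T/2-T/(2+\varepsilon)=\Omega(\varepsilon T)$.

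The main obstacle is bounding the iteration count independently of edge-weight magnitudes, since a priori $\Phi$ is bounded only by the total edge weight, which may dwarf $T$. I would handle this with a preprocessing step: edges of weight strictly greater than $T$ impose rigid same-side/opposite-side constraints on $(A,B)$ (same side for $E^+$, opposite side for $E^-$), a $2$-colorability system that is either satisfiable in polynomial time or else certifies $T>\text{OPT}_\MMA$ and causes the binary search to decrease $T$. After contracting the resulting equivalence classes, only edges of weight at most $T$ remain, so $\Phi\leq n^2 T$ on the contracted instance and the restricted local search terminates in $O(n^2/\varepsilon)$ moves. The structural lemma above is easy; the actual content of the proof is this preprocessing together with the amortization argument, and the verification that restricting moves to those that cross the threshold never blocks progress while some node is still below it.
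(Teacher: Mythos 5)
Your structural core (the identity $p_A(u)+p_B(u)=c(u)$, the thresholded local search, and the potential $\Phi=\sum_u \text{agree}(u)$ increasing by $\Omega(\varepsilon T)$ per move) matches the paper's non-oblivious local search, and you correctly identify that the whole difficulty is bounding $\Phi$ relative to the threshold. But your preprocessing step has a genuine flaw. An edge of weight $c_{u,v}>T$ does \emph{not} impose a rigid constraint on solutions of value $\geq T$: a node $u$ may misclassify such an edge and still have $\text{agree}(u)\geq T$ provided $c(u)-c_{u,v}\geq T$, which is entirely possible when $c(u)$ is large. Consequently, unsatisfiability of your $2$-coloring system does \emph{not} certify $T>\text{OPT}$ (take a triangle with two $+$ edges and one $-$ edge, all of weight $W$: the system is unsatisfiable for every $T<W$, yet $\text{OPT}=W$), so the binary search can reject correct guesses and loses its monotonicity. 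A second problem is that after contracting heavy-edge components, a node $u$ below threshold can only move together with its entire component; edges from $u$ to other nodes of the same component do not flip classification under such a move, so the identity $p_A(u)+p_B(u)=c(u)$ no longer applies and the guaranteed payoff $c(u)/2$ after a move is lost. The local search can then get stuck with a node below threshold and no legal improving move.

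The paper avoids both issues with a different (and simpler) device. There is no guessing: it sets the threshold directly to $c^*=\min_{u\in V}c(u)$, which is computable exactly and upper bounds $\text{OPT}$. To bound the potential, it lets $T=\{u:c(u)=c^*\}$ and \emph{reduces the weights} of edges with both endpoints outside $T$ until no such edge has positive weight, maintaining the invariants that $c^*$ is unchanged and every node still has $c(u)\geq c^*$. This forces $\sum_e c_e\leq \sum_{u\in T}c(u)\leq nc^*$, so $\Phi\leq 2nc^*$ and the search halts in $O(n/\varepsilon)$ moves; since weights only decreased, the final guarantee $\text{agree}(u)\geq(\nicefrac[]{1}{2}-\varepsilon)c^*$ transfers to the original instance. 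The weight reduction constrains nothing about the clustering, which is exactly what your constraint-based preprocessing fails to achieve. If you want to salvage your route, you would need a weight-modification argument of this kind rather than a feasibility/contraction argument.
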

\begin{theorem}\label{thrm:IntegralityGapMMA}
The natural LP and SDP relaxations for {\MMA} have an integrality gap of $\frac{n}{2(n-1)}$.
\end{theorem}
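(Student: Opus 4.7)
The plan is to exhibit, for every $n \geq 3$, a single explicit instance of {\MMA} on which every integer clustering has value at most $1$ while the natural LP (and its SDP strengthening) attains value $\frac{2(n-1)}{n}$; the ratio is exactly $\frac{n}{2(n-1)}$. The instance I would use is the ``plus-path with closing minus-edge'': $V = \{1, 2, \ldots, n\}$, $E^+ = \{(i, i+1) : 1 \leq i \leq n-1\}$ forming a Hamilton path with unit weights, and $E^- = \{(1, n)\}$ also with unit weight. Every vertex then has exactly two incident edges, so a trivial upper bound on $\text{agree}_{\mathcal{S}}(u)$ is $2$.

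For the integer bound, I would argue by a dichotomy on the clusters of $1$ and $n$. If $1$ and $n$ lie in the same cluster of a clustering $\mathcal{S}$, then $(1,n) \in E^-$ disagrees and node $1$'s only remaining edge $(1,2)$ contributes at most $1$, giving $\text{agree}_{\mathcal{S}}(1) \leq 1$. Otherwise some path-edge $(i, i+1) \in E^+$ must cross clusters and disagrees; for an internal index $i$ both incident edges lie on the path, one of which now contributes $0$, giving $\text{agree}_{\mathcal{S}}(i) \leq 1$, while for $i \in \{1, n-1\}$ the relevant endpoint has only the one path edge plus the already-accounted $-$-edge, again giving agreement at most $1$. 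Thus $\min_u \text{agree}_{\mathcal{S}}(u) \leq 1$, and placing all of $V$ in one cluster achieves the value $1$, so $\text{OPT}_{\text{int}} = 1$.

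For the LP I would use the line-metric assignment $x_{i,j} = |i - j|/n$ on every pair, which trivially satisfies the triangle inequality and lies in $[0, 1]$. A direct calculation shows that each vertex's LP agreement equals exactly $\frac{2(n-1)}{n}$: internal vertices pick up $2(1 - 1/n)$ from their two $+$-edges, and each endpoint picks up $(1 - 1/n)$ from its single path edge together with $x_{1,n} = (n-1)/n$ from the $-$-edge. To lift this to the SDP I would realize the same metric by explicit unit vectors via a sliding-window construction: for $i = 1, \ldots, n$, set $\bv_i \in \mathbb{R}^{2n-1}$ with $(\bv_i)_k = 1/\sqrt{n}$ for $k \in \{i, i+1, \ldots, i + n - 1\}$ and $0$ otherwise. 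Then $\|\bv_i\| = 1$, the inner product $\bv_i \cdot \bv_j = (n - |i-j|)/n = 1 - x_{i,j}$ lies in $[0, 1]$, and the $\ell_2^2$-triangle inequality reduces to the line-metric triangle inequality that already holds.

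The step I expect to need the most care is verifying the integer bound exhaustively: one must cover both the ``$1$ and $n$ share a cluster'' case and the ``path has a crossing edge'' case (with separate attention to whether the crossing edge is incident to an interior vertex or to an endpoint), noting each time that two adjacent degree-$2$ vertices suffer. The LP/SDP half is essentially a substitution once the line metric and the sliding-window embedding are written down. Combining the three computations yields an instance with $\text{OPT}_{\text{int}}/\text{LP} = \text{OPT}_{\text{int}}/\text{SDP} = n/(2(n-1))$, establishing the claimed integrality gap for both relaxations.
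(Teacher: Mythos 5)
Your proposal is correct and matches the paper's proof essentially step for step: the same instance (a cycle of $n-1$ unit $+$ edges closed by one $-$ edge), the same integral upper bound of $1$, and the same fractional solution, since your line metric $|i-j|/n$ coincides with the paper's shortest-path metric and your sliding-window vectors realize exactly the Gram matrix $1-|i-j|/n$ that the paper obtains by averaging the $n$ integral two-cluster (and one-cluster) solutions. The only difference is presentational — an explicit vector embedding versus a convex combination of integral Gram matrices to certify positive semidefiniteness — and both are valid.
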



\begin{table}[h]
\caption{Results for {\CC} with local guarantees.\label{tab:Results}}
\begin{center}
{\footnotesize
{
\begin{tabu}{ |[1pt]c|[1pt]c|[1pt]c|[1pt]c| }
\hline
\multirow{2}{*}{\bf Problem} & \multirow{2}{*}{\bf Input Graph} & \multicolumn{2}{ c| }{\bf Approximation} \\
\cline{3-4}
& & {\bf This Work} & {\bf Previous Work} \\
\hline
\multirow{2}{*}{\MLD} & complete & $7$ & $48$ \cite{puleo2016correlation} \\
\cline{2-4}
& complete bipartite (one sided) & $7$ & $10$ \cite{puleo2016correlation} \\
\hline
{\MMD} & general weighted & $O(\sqrt{n})$ & $-$ \\
\hline
{\hspace{-15pt}\MMST} & \multirow{3}{*}{general weighted} & \multirow{3}{*}{$O(\sqrt{n})$} & \multirow{3}{*}{$-$}\\
{\MMMW} & & &\\
{\hspace{-20pt}\MMM} & & &\\
\hline
{\MMA} & general weighted & $\nicefrac[]{1}{(2+\varepsilon)}$ & $\nicefrac[]{1}{(4+\varepsilon)}$ \cite{bhalgat2010approximating} \\
\hline
\end{tabu}
}
}

\end{center}
\end{table}
\noindent Our main algorithmic results are summarized in Table \ref{tab:Results}.

\vspace{5pt}
\noindent {\bf{Approach and Techniques:}} 
The non-linear nature of {\CC} with local guarantees makes problems in this family much harder to approximate than {\CC} with classic global objectives.

Firstly, LP and SDP relaxations are not always useful when considering local objectives.
For example, the natural LP relaxation for the global objective of minimizing total disagreements on general graphs has a bounded integrality gap of $O(\log{n})$ \cite{charikar2003clustering,demaine2006correlation,garg1993approximate}.
However, we prove that for its local objective counterpart, \ie, {\MMD}, both the natural LP and SDP relaxations have a huge integrality gap of $\nicefrac[]{n}{2}$ (Theorem \ref{thrm:IntegralityGapMMD}).
To overcome this our algorithm for {\MMD} on general weighted graphs uses a {\em combination} of the LP lower bound and a combinatorial bound.
Even though each of these bounds on its own is bad, we prove that their combination suffices to obtain an approximation of $O(\sqrt{n})$, thus bypassing the huge integrality gaps of $\nicefrac[]{n}{2}$.

Secondly, randomization is inherently difficult to use for local guarantees, while many of the algorithms for minimizing total disagreements, \eg, \cite{ailon2012improved,ailon2008aggregating,chawla2015near}, as well as maximizing total agreements, \eg, \cite{swamy2004correlation}, are all randomized in nature.
The reason is that a bound on the expected weight of misclassified edges incident on any node does not translate to a bound on the maximum of this quantity over all nodes (similarly the expected weight of correctly classified edges incident on any node does not translate to a bound on the minimum of this quantity over all nodes).
To overcome this difficulty, all the algorithms we present are deterministic, \eg, for {\MLD} we propose a new remarkably simple method of clustering that greedily chooses a center node $s^*$ and cuts a sphere of a fixed and predefined radius around $s^*$, and for {\MMA} we present a new {\em non-oblivious} local search algorithm that
runs on a graph with modified edge weights and
circumvents the need to compute approximate pure Nash equilibria in party affiliation games.

\vspace{5pt}
\noindent {\bf{Paper Organization:}}
Section \ref{sec:MinLocalDisagree} contains the improved approximations for {\MMD} on general weighted graphs and for {\MLD} on complete and complete bipartite graphs (Theorems \ref{thrm:sqrtMMD}, \ref{thrm:7ApproxCliqueMLD}, and \ref{thrm:7ApproxBipartiteMLD}), along with the integrality gaps of the natural LP and SDP relaxations (Theorem \ref{thrm:IntegralityGapMMD}).
Section \ref{sec:MaxMinAgree} contains the improved approximation for {\MMA} as well as the integrality gaps of the natural LP and SDP relaxations (Theorems \ref{thrm:ApproxMMA} and \ref{thrm:IntegralityGapMMA}).

\section{Preliminaries}
We state the conditions required from both $f$ and $g$ in the definitions of {\MLD} and {\MLA}, respectively.
$f$ is required to satisfy the following three conditions:
$(1)$ for any $\bx, \by\in \mathcal{R}_+^V$ if $\bx \leq \by$ then $f(\bx)\leq f(\by)$ (monotonicity), $(2)$ $f(\alpha \bx)\leq \alpha f(\bx)$ for any $\alpha \geq 0$ and $\bx\in \mathcal{R}_+^V$ (scaling), and $(3)$ $f$ is convex.
Whereas, $g$ is required to satisfy the following two conditions:
$(1)$ for any $\bx, \by\in \mathcal{R}_+^V$ if $\bx \leq \by$ then $g(\bx)\leq g(\by)$ (monotonicity), and $(2)$ $g(\alpha \bx)\geq \alpha g(\bx)$ for any $\alpha \geq 0$ and $\bx\in \mathcal{R}_+^V$ (reverse scaling).
Note that $g$ is not required to be concave.

\section{Local Minimization of Disagreements and Graph Cuts}\label{sec:MinLocalDisagree}

We consider the natural convex programming relaxation for {\MLD}. 
The relaxation imposes a metric $d$ on the vertices of the graph.
For each node $u\in V$ we have a variable $D(u)$ denoting the total fractional {\em disagreement} of edges incident on $u$. 
Additionally, we denote by $\bD\in \mathcal{R}_+^V$ the vector of all $D(u)$ variables.
Note that the relaxation is solvable in polynomial time since $f$ is convex.\footnote{The convexity of $f$ is used only to show that relaxation (\ref{Relaxation:Disagreements}) can be solved, and it is not required in the rounding process.}

\begin{align}
\min ~~~ & f\left( \bD\right) & \label{Relaxation:Disagreements}\\
& \sum _{v:(u,v)\in E^+}c_{u,v} d\left( u,v\right) + \sum _{v:(u,v)\in E^-} c_{u,v} \left( 1-d\left( u,v\right) \right) = D(u) & \forall u\in V \nonumber \\
& d(u,v) + d(v,w) \geq d(u,w) & \forall u,v,w\in V \nonumber \\
& D(u)\geq 0, ~0\leq d(u,v) \leq 1 & \forall u,v\in V \nonumber
\end{align}

\noindent For the special case of {\MMD}, \ie, $f$ is the $\max$ function, (\ref{Relaxation:Disagreements}) can be written as an LP.
The proof of Theorem \ref{thrm:IntegralityGapMMD}, which states that even for the special case of {\MMD} the above natural LP and in addition the natural SDP both have a large integrality gap of $\nicefrac[]{n}{2}$, appears in Appendix \ref{app:GapMMD}.
We note that Theorem \ref{thrm:IntegralityGapMMD} also applies to {\MMST}, a further special case of {\MMD}.

\subsection{Min Max Disagreements on General Weighted Graphs}


Our algorithm for {\MMD} on general weighted graphs cannot rely solely on the the lower bound of the LP relaxation, since it admits an integrality gap of $\nicefrac[]{n}{2}$ (Theorem \ref{thrm:IntegralityGapMMD}).
Thus, a different lower bound must be used.
Let {$\Cmax$} be the maximum weight of an edge that is misclassified in some optimal solution $\mathcal{S}^*$.
Clearly, {$\Cmax$} also serves as a lower bound on the value of an optimal solution.
Hence, we can mix these two lower bounds and choose $\max \left\{ \max_{u\in V}\left\{ D(u)\right\},\Cmax\right\}$ to be the lower bound we use.
Note that we can assume w.l.o.g. that {$\Cmax$} is known to the algorithm, as one can simply execute the algorithm for every possible value of {$\Cmax$} and return the best solution.

Our algorithm consists of two main phases.
In the first we compute the LP metric $d$ but require additional constraints that ensure no {\em heavy} edge, \ie, an edge $e$ having $ c_e > \Cmax$, is (fractionally) misclassified by $d$.
In the second phase, we perform a careful {\em layered clustering} of an auxiliary graph consisting of all $+$ edges whose length in the metric $d$ is short.
At the heart of the analysis lies a distinction between $+$ edges whose length in the metric $d$ is short and all other edges.
The contribution of the former is bounded using the combinatorial lower bound, \ie, $\Cmax$, whereas the contribution of the latter is bounded using the LP.
Our algorithm also ensures that in the final clustering no heavy edge is misclassified.
Let us now elaborate on the two phases, before providing an exact description of the algorithm (Algorithm \ref{alg:MMD_General}).

\noindent {\bf{Phase $1$ (constrained metric computation):}}
Denote by, $$E^+_{\text{heavy}}\triangleq \{ e\in E^+:c_e>\Cmax\} ~~~\text{ and }~~~ E^-_{\text{heavy}}\triangleq \{ e\in E^-:c_e>\Cmax\}$$ the collection of all heavy $+$ and $-$ edges, respectively.
We solve the LP relaxation (\ref{Relaxation:Disagreements}) (recall that $f$ is the $\max$ function) while adding the following additional constraints that ensure $d$ does not (fractionally) misclassify heavy edges:
\begin{align}
d(u,v) = 0 & ~~~~~~~~~~~\forall e=(u,v)\in E^+_{\text{heavy}} \label{extraConst1} \\
d(u,v) = 1 & ~~~~~~~~~~~\forall e=(u,v)\in E^-_{\text{heavy}} \label{extraConst2}
\end{align}
If no feasible solution exists then our current guess for $\Cmax$ is incorrect.

\noindent {\bf{Phase $2$ (layered clustering):}}
Denote the collections of $+$ and $-$ edges which are {\em almost} classified correctly by $d$ as $E^+_{\text{bad}}\triangleq \left\{ e=(u,v)\in E^+:d(u,v)<\nicefrac[]{1}{\sqrt{n}}\right\}$ and $E^-_{\text{bad}}\triangleq \left\{ e=(u,v)\in E^-:d(u,v)>1-\nicefrac[]{1}{\sqrt{n}}\right\}$, respectively.
Intuitively, any edge $ e\notin E^+_{\text{bad}}\cup E^-_{\text{bad}}$ can use its length $d$ to pay for its contribution to the cost, regardless of what the output is. 
This is not the case with edges in $E^+_{\text{bad}} $ and $E^-_{\text{bad}} $, therefore all such edges are considered {\em bad}.
Additionally, denote by $E^+_0\triangleq \left\{ e=(u,v)\in E^+:d(u,v)=0\right\}$ the collection of $+$ edges for which $d$ assigns a length of $0$.\footnote{Note that $ E^+_{\text{heavy}}\subseteq E^+_0\subseteq E^+_{\text{bad}}$ and $E^-_{\text{heavy}}\subseteq  E^-_{\text{bad}}$.}

We design the algorithm so it ensures that no mistakes are made for edges in $ E^+_0$ and $E^-_{\text{bad}} $.
However, the algorithm might make mistakes for edges in $ E^+_{\text{bad}}$, thus a careful analysis is required.
To this end we consider the auxiliary graph consisting of all edges in $E^+_{\text{bad}}$, \ie, $G^+_{\text{bad}}\triangleq \left( V,E^+_{\text{bad}}\right)$, and equip it with the distance function $\text{dist}_{\ell}$ defined as the shortest path metric with respect to the length function $\ell:E^+_{\text{bad}}\rightarrow\left\{ 0,1\right\}$:
\begin{align}
\ell (e) \triangleq \begin{cases}0 & e\in E^+_0\\ 1 & e\in E^+_{\text{bad}}\setminus E^+_0\end{cases}\nonumber
\end{align}
Assume $E^-_{\text{bad}} $ contains $k$ edges and denote the endpoints of the $i$\textsuperscript{th} edge by $s_i$ and $t_i$.
The algorithm partitions every connected component $X$ of $G^+_{\text{bad}}$ into clusters as follows: as long as $X$ contains $s_i$ and $t_i$ for some $i$, we examine the layers $\text{dist}_{\ell}(s_i,\cdot)$ defines and perform a carefully chosen level cut.
This {\em layered clustering} suffices as we can prove that our choice of a level cut ensures $(1)$  no mistakes are made for edges in $ E^+_0$ and $E^-_{\text{bad}} $, and $(2)$ the {\em number} of misclassified edges from $E^+_{\text{bad}}\setminus E^+_0$ incident on any node is at most $O(\sqrt{n})$.
This ends the description of the second phase.

\begin{algorithm}
\caption{Layered Clustering $(G=(V,E),\Cmax)$}\label{alg:MMD_General}
\begin{algorithmic}[1]
\STATE $\mathcal{C}\leftarrow \emptyset$.
\STATE let $d$ be a solution to LP (\ref{Relaxation:Disagreements}) with the additional constraints (\ref{extraConst1}) and (\ref{extraConst2})
\FOR {every connected component $X$ in $G^+_{\text{bad}}$}
\WHILE {$X$ contains $\left\{ s_i,t_i\right\}$ for some $i$}
\STATE $ r_i\leftarrow \text{dist}_{\ell}(s_i,t_i)$ and $L^i_j\leftarrow \left\{ u:\text{dist}_{\ell}(s_i,u)=j\right\}$ for every $j=0,1,\ldots, r_i$.
\STATE choose $j^*\leq\nicefrac[]{(\sqrt{n}-1)}{2}$ s.t. $| L^i_{j^*}|,| L^i_{j^*+1}|,| L^i_{j^*+2}|\leq 16\sqrt{n}$.
\STATE $S\leftarrow \cup _{j=0}^{j^*}L^i_j$.
\STATE $X\leftarrow X\setminus S$ and $\mathcal{C}\leftarrow \mathcal{C}\cup \{ S\} $.
\ENDWHILE
\STATE $\mathcal{C}\leftarrow \mathcal{C}\cup \left\{ X\right\}$.
\ENDFOR
\STATE Output $\mathcal{C}$.
\end{algorithmic}
\end{algorithm}

Refer to Algorithm \ref{alg:MMD_General} for a precise description of the algorithm.
The following Lemma states that the distance between any $\{ s_i,t_i\}$ pair with respect to the metric $\text{dist}_{\ell}$ is large, its proof appears in Appendix \ref{app:LongPath}.
\begin{lemma}\label{lem:LongPath}
For every $i=1,\ldots,k$, $\text{dist}_{\ell}(s_i,t_i)>\sqrt{n}-1$.
\end{lemma}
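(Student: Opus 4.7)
The plan is to fix an arbitrary path $s_i = v_0, v_1, \ldots, v_p = t_i$ in $G^+_{\text{bad}}$ (if no such path exists, $\text{dist}_{\ell}(s_i,t_i) = \infty$ and the claim is trivial) and derive a lower bound on its $\ell$-length $L$ that holds uniformly over all such paths. The first step will be to apply the triangle inequality of the LP metric $d$ repeatedly to obtain
\[ d(s_i,t_i) \;\leq\; \sum_{j=0}^{p-1} d(v_j,v_{j+1}). \]

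Next I would classify each edge of the path into one of two types. If $(v_j,v_{j+1}) \in E^+_0$, it contributes $0$ to the sum above and $0$ to $L$; if $(v_j,v_{j+1}) \in E^+_{\text{bad}} \setminus E^+_0$, it contributes $1$ to $L$ and, by the very definition of $E^+_{\text{bad}}$, strictly less than $1/\sqrt{n}$ to the sum. Adding up the contributions of the two types shows that the right-hand side above is strictly less than $L/\sqrt{n}$ whenever $L \geq 1$. The degenerate case $L = 0$ is easy to dispose of: a path of $\ell$-length zero consists only of edges in $E^+_0$, which would force $d(s_i,t_i)=0$ and contradict the inequality $d(s_i,t_i) > 1 - 1/\sqrt{n}$ implied by $(s_i,t_i) \in E^-_{\text{bad}}$.

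To finish, I would chain the two bounds to get $1 - 1/\sqrt{n} < d(s_i,t_i) < L/\sqrt{n}$, which rearranges immediately to $L > \sqrt{n} - 1$. Since this holds for every $s_i$ to $t_i$ path in $G^+_{\text{bad}}$, taking the minimum yields $\text{dist}_{\ell}(s_i,t_i) > \sqrt{n} - 1$. The main subtlety in executing this plan is preserving the strict inequalities throughout --- both the strict inequality in the definition of $E^+_{\text{bad}}$ and the one in the definition of $E^-_{\text{bad}}$ are essential for obtaining the strict ``$>$'' in the final conclusion rather than a weak ``$\geq$''.
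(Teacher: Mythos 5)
Your proposal is correct and follows essentially the same route as the paper's proof: apply the triangle inequality of $d$ along the path, discard the zero-length $E^+_0$ edges, and use the strict bounds $d(s_i,t_i)>1-\nicefrac[]{1}{\sqrt{n}}$ and $d(u,v)<\nicefrac[]{1}{\sqrt{n}}$ for the remaining edges to force more than $\sqrt{n}-1$ of them. Your explicit handling of the degenerate $L=0$ case is a minor extra care not spelled out in the paper, but the argument is identical in substance.
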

%

\noindent The following Lemma simply states that only a few layers could be too large, its proof appears in Appendix \ref{app:BadLayers}.
It implies Corollary \ref{cor:ChoosingLayer}, whose proof appears in Appendix \ref{app:ChoosingLayer}.
\begin{lemma}\label{lem:BadLayers}
For every $i=1,\ldots,k$, the number of layers $L^i_j$ for which $|L^i_j|> 16\sqrt{n}$ is at most $\nicefrac[]{\sqrt{n}}{16}$.
\end{lemma}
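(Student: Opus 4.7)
The plan is to prove Lemma \ref{lem:BadLayers} by a direct counting (pigeonhole) argument, exploiting the fact that the layers $L^i_j$ are by definition disjoint, since a vertex $u$ belongs to $L^i_j$ only when $\text{dist}_{\ell}(s_i,u)=j$, and the shortest path distance to $s_i$ is a single well defined value.

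First I would observe that for each fixed $i$, the sets $\{L^i_j\}_{j=0,\ldots,r_i}$ are pairwise disjoint subsets of $V$, so $\sum_{j=0}^{r_i}|L^i_j|\leq |V|=n$. Let $N$ be the number of indices $j\in\{0,1,\ldots,r_i\}$ for which $|L^i_j|>16\sqrt{n}$. Restricting the sum above to just these indices gives $n\geq \sum_{j:|L^i_j|>16\sqrt{n}}|L^i_j|> 16\sqrt{n}\cdot N$, from which $N<\nicefrac[]{\sqrt{n}}{16}$, and in particular $N\leq \nicefrac[]{\sqrt{n}}{16}$, yielding the claim.

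There is really no hard step here; the only thing to be careful about is that the layers are genuinely defined for all vertices whose shortest-path distance from $s_i$ (in $G^+_{\text{bad}}$ with edge lengths $\ell$) is finite, which is precisely the connected component $X$ containing $s_i$. Since $|X|\leq n$, the bound $\sum_j|L^i_j|\leq n$ is valid, and that is all the argument needs. This Lemma is the combinatorial ingredient that, together with Lemma \ref{lem:LongPath} (which guarantees more than $\sqrt{n}-1$ layers between $s_i$ and $t_i$), yields Corollary \ref{cor:ChoosingLayer}: among the first $\nicefrac[]{(\sqrt{n}-1)}{2}$ layers there must exist three consecutive layers $L^i_{j^*},L^i_{j^*+1},L^i_{j^*+2}$ each of size at most $16\sqrt{n}$, since the number of oversized layers is at most $\nicefrac[]{\sqrt{n}}{16}$ and hence cannot block every triple of consecutive layers within a window of length $\Omega(\sqrt{n})$.
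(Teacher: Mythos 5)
Your proof is correct and is essentially the same pigeonhole argument as the paper's: the layers $L^i_j$ are pairwise disjoint, so their sizes sum to at most $n$, and hence at most $\nicefrac[]{n}{(16\sqrt{n})}=\nicefrac[]{\sqrt{n}}{16}$ of them can exceed $16\sqrt{n}$. The extra remarks about disjointness and the connected component are fine but not needed beyond what the paper states.
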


\begin{corollary}\label{cor:ChoosingLayer}
Algorithm \ref{alg:MMD_General} can always find $j^*$ as required.
\end{corollary}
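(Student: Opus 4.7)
The plan is to argue by a simple counting / pigeonhole on the candidate values of $j^*$, combining Lemmas \ref{lem:LongPath} and \ref{lem:BadLayers}. Let us call a layer $L^i_j$ \emph{bad} if $|L^i_j| > 16\sqrt{n}$; otherwise call it \emph{good}. The algorithm seeks a single $j^* \leq (\sqrt{n}-1)/2$ for which three consecutive layers $L^i_{j^*}, L^i_{j^*+1}, L^i_{j^*+2}$ are all good.

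First I would check that all layers involved actually exist, i.e. that $L^i_{j^*+2}$ is well defined for any $j^* \leq (\sqrt{n}-1)/2$. By Lemma \ref{lem:LongPath}, $r_i = \text{dist}_{\ell}(s_i,t_i) > \sqrt{n} - 1$, so layers up to index $r_i > \sqrt{n}-1$ are well defined; for $n \geq 25$ this comfortably exceeds $(\sqrt{n}-1)/2 + 2$, while for very small $n$ one can absorb the corner case into the $O(\sqrt{n})$ constant. The candidate set for $j^*$ is thus $J \triangleq \{0, 1, \ldots, \lfloor (\sqrt{n}-1)/2 \rfloor\}$, and $|J| \geq (\sqrt{n}-1)/2 + 1 = (\sqrt{n}+1)/2$.

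Next I would count how many elements of $J$ can be killed by the existence of bad layers. A choice $j^*\in J$ is killed precisely when at least one of $L^i_{j^*}$, $L^i_{j^*+1}$, $L^i_{j^*+2}$ is bad. Equivalently, each bad layer $L^i_\ell$ rules out the (at most three) values $j^* \in \{\ell-2, \ell-1, \ell\} \cap J$. By Lemma \ref{lem:BadLayers}, there are at most $\sqrt{n}/16$ bad layers in total, so the number of killed candidates in $J$ is at most $3 \cdot \sqrt{n}/16 = 3\sqrt{n}/16$.

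Finally I would compare: the number of surviving candidates is at least
\[
\frac{\sqrt{n}+1}{2} - \frac{3\sqrt{n}}{16} = \frac{8\sqrt{n}+8 - 3\sqrt{n}}{16} = \frac{5\sqrt{n}+8}{16} > 0,
\]
so some $j^*\in J$ satisfies $|L^i_{j^*}|, |L^i_{j^*+1}|, |L^i_{j^*+2}| \leq 16\sqrt{n}$, as required. The only real subtlety is the boundary issue of ensuring the layers exist up to index $j^*+2$, but this is handled by Lemma \ref{lem:LongPath}; the rest is arithmetic.
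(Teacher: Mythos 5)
Your proof is correct and follows essentially the same route as the paper: both arguments combine Lemma \ref{lem:BadLayers} (at most $\nicefrac[]{\sqrt{n}}{16}$ oversized layers) with the fact that the window must lie among the first $O(\sqrt{n})$ layers, which exist by Lemma \ref{lem:LongPath}, and then apply a pigeonhole count. Your bookkeeping --- counting window positions $j^*$ killed by each bad layer rather than counting good layers and asserting a run of three --- is a slightly more explicit way to finish the same argument.
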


\noindent Lemma \ref{lem:CorrectEdges} proves that no mistakes are made for edges in $E^+_0 $ and $ E^-_{\text{bad}}$, whereas Lemma \ref{lem:Short+Edges} bounds the {\em number} of misclassified edges from $E^+_{\text{bad}}\setminus E^+_0 $ incident on any node.
Their proofs appear in Appendices \ref{app:CorrectEdges} and \ref{app:Short+Edges}.
\begin{lemma}\label{lem:CorrectEdges}
Algorithm \ref{alg:MMD_General} never misclassifies edges in $E^+_0 $ and $ E^-_{\text{bad}}$.
\end{lemma}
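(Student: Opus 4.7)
The plan is to verify the two claims separately: first that no edge in $E^+_0$ has its endpoints split across different output clusters, and second that every pair $(s_i,t_i)\in E^-_{\text{bad}}$ ends up in two different clusters of $\mathcal{C}$.

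For $E^+_0$, I would first note that $E^+_0\subseteq E^+_{\text{bad}}$, so every such edge belongs to the auxiliary graph $G^+_{\text{bad}}$ and has $\ell$-length $0$; in particular its two endpoints lie in a common connected component $X$ and are at $\text{dist}_\ell$-distance exactly $0$. The key observation is then that for any source $s_i$ used in an iteration, the triangle inequality in $\text{dist}_\ell$ forces $\text{dist}_\ell(s_i,u)=\text{dist}_\ell(s_i,v)$ whenever $(u,v)\in E^+_0$. Hence $u$ and $v$ always share a layer $L^i_j$, so in every level cut they are either both included in $S$ or both left behind in $X$. By induction on the number of iterations, the two endpoints are never separated, and either end up together inside some cluster $S$ that is removed or together in the leftover component that becomes the final cluster.

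For $E^-_{\text{bad}}$, I would split into cases. If $s_i$ and $t_i$ lie in different connected components of $G^+_{\text{bad}}$, the algorithm processes each component independently and outputs clusters that are subsets of single components, so $s_i$ and $t_i$ are automatically separated. Otherwise they share a component. The main step is to show that during any single iteration, whose source is some $s_j$ and whose radius $j^*\leq(\sqrt{n}-1)/2$, the cluster $S$ cannot contain both $s_i$ and $t_i$. Assuming $s_i\in S$, one has $\text{dist}_\ell(s_j,s_i)\leq j^*$, and then by Lemma \ref{lem:LongPath} together with the triangle inequality,
\[
\text{dist}_\ell(s_j,t_i)\;\geq\;\text{dist}_\ell(s_i,t_i)-\text{dist}_\ell(s_j,s_i)\;>\;(\sqrt{n}-1)-j^*\;\geq\; j^*,
\]
so $t_i\notin S$; the symmetric argument handles the case $t_i\in S$. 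Consequently, the first iteration that removes either $s_i$ or $t_i$ into some cluster leaves the other behind in the surviving component, and the two can never be reunited by subsequent iterations.

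The only real subtlety I expect is the interaction between different index pairs in the second part, namely that the pair $(s_i,t_i)$ may be split during an iteration driven by a different source $s_j$; this is precisely what the triangle-inequality bound above addresses, and it is why the algorithm's radius constraint is specifically $j^*\leq (\sqrt{n}-1)/2$ rather than anything larger. Everything else is a direct unwinding of the definitions of $E^+_0$, $E^-_{\text{bad}}$, $\ell$, and $\text{dist}_\ell$.
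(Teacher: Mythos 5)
Your proof is correct and follows essentially the same route as the paper's: zero-$\ell$-length edges keep both endpoints in the same layer of every level cut, and the bound $\text{dist}_{\ell}(s_i,t_i)>\sqrt{n}-1$ from Lemma \ref{lem:LongPath} together with the triangle inequality shows that no sphere of radius $j^*\leq\nicefrac[]{(\sqrt{n}-1)}{2}$ can contain both $s_i$ and $t_i$. The cross-index subtlety you flag (the pair $(s_i,t_i)$ being split by a cut centered at a different source $s_j$) is exactly what the paper compresses into the one-line observation that every output cluster is a $\text{dist}_{\ell}$-ball of radius at most $\nicefrac[]{(\sqrt{n}-1)}{2}$.
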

%

\begin{lemma}\label{lem:Short+Edges}
Let $u\in V$ and $S$ be the cluster in $\mathcal{C}$ Algorithm \ref{alg:MMD_General} assigned $u$ to.
Then, $\left|\left\{ e\in E^+_{\text{bad}}\setminus E^+_0:e=(u,v),v\notin S\right\}\right|\leq 48\sqrt{n}$.
\end{lemma}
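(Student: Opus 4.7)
The plan is to fix a vertex $u\in V$, let $S_u$ denote the cluster of $\mathcal{C}$ containing $u$, and classify the cut neighbors of $u$ along edges in $E^+_{\text{bad}}\setminus E^+_0$ by the iteration at which each such neighbor was placed in its cluster. I will focus on the representative case in which $S_u$ is cut out during some iteration $\tau$ of the while loop, with source $s_i$, chosen level $j^*$, and $u\in L^i_{j_u}$ for some $j_u\le j^*$; the residual case, where $S_u$ is the leftover $X$ added after the loop ends, is handled by the same mechanism without a contribution from the iteration $\tau$ itself.

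The key structural ingredient is a layer-adjacency observation: every edge $(u,v)\in E^+_{\text{bad}}\setminus E^+_0$ satisfies $\text{dist}_{\ell}(u,v)=1$. The bound $\le 1$ is immediate from $\ell((u,v))=1$, and the reverse holds because any length-$0$ path from $u$ to $v$ in the $\ell$-weighted auxiliary graph would force $d(u,v)=0$ by the triangle inequality of the LP metric, contradicting $(u,v)\notin E^+_0$. Consequently, whenever $u$ and $v$ are both present in $X$ during some iteration $\tau'$ and lie in the same connected component as the source $s_{i(\tau')}$, their layers from that source differ by at most one.

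I split the cut neighbors of $u$ into two groups. Group~B consists of those $v$ still in $X$ at the start of iteration $\tau$; the layer-adjacency observation together with $v\notin S_u$ places such a $v$ in $L^i_{j^*+1}$ and forces $j_u=j^*$, giving at most $|L^i_{j^*+1}|\le 16\sqrt{n}$ cut edges. Group~A consists of those $v$ cut in some earlier iteration $\tau'<\tau$; at iteration $\tau'$, layer adjacency combined with $v\in S^{\tau'}$ and $u\notin S^{\tau'}$ forces $u\in L^{i(\tau')}_{j^*(\tau')+1}$ and $v\in L^{i(\tau')}_{j^*(\tau')}$, so each such iteration contributes at most $|L^{i(\tau')}_{j^*(\tau')}|\le 16\sqrt{n}$ cut edges, with distinct $v$'s across iterations (every $v$ is placed exactly once).

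The main obstacle is to cap the total Group~A contribution across all prior iterations at $32\sqrt{n}$; equivalently, to argue that $u$ can sit in the outer-boundary layer $L^{i(\tau')}_{j^*(\tau')+1}$ of a removed ball (while having at least one neighbor inside that ball) in at most two prior iterations. I expect this to be the point at which the third layer-size constraint $|L^i_{j^*+2}|\le 16\sqrt{n}$ in the algorithm's choice of $j^*$ is exploited, together with a structural argument showing that once $u$ has played the role of an outer-boundary vertex, the removal of the corresponding ball cuts off enough of $u$'s short paths in the shrinking $X$ to prevent this scenario from recurring more than twice before $u$ itself is cut in iteration $\tau$. Combining the two groups then yields the claimed bound of $48\sqrt{n}$.
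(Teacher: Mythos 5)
Your decomposition of $u$'s cut neighbors by the iteration in which they were removed leads you into a sum over all prior iterations, and the step you need to close it --- that $u$ can appear in the outer-boundary layer $L^{i(\tau')}_{j^*(\tau')+1}$ with a neighbor inside the removed ball in at most two prior iterations --- is left as a conjecture (``I expect this to be the point at which\dots''), with only a vague mechanism sketched. This is a genuine gap: nothing you have established rules out $u$ losing a few neighbors in each of many successive iterations, and the structural claim you hope for (that boundary appearances are self-limiting) is neither proved nor obviously true. The third layer-size constraint $|L^i_{j^*+2}|\leq 16\sqrt{n}$ is indeed the missing ingredient, but it is not used to count iterations.

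The paper's argument avoids the per-iteration bookkeeping entirely. Consider the \emph{first} iteration in which some edge of $E^+_{\text{bad}}\setminus E^+_0$ incident on $u$ is misclassified. If that is the iteration in which $u$'s own cluster is formed, then $u\in L^i_{j^*}$, all lost neighbors lie in $L^i_{j^*+1}$, and $u$ loses nothing afterwards, giving $16\sqrt{n}$. Otherwise $u\in L^i_{j^*+1}$, and the key observation is that at this moment \emph{every} neighbor of $u$ in $G^+_{\text{bad}}$ is still present in the component (none could have been separated from $u$ earlier, since separating a $G^+_{\text{bad}}$-neighbor would already have misclassified a bad edge at $u$, and $E^+_0$ edges never cross clusters by Lemma~\ref{lem:CorrectEdges}). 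Since edges of $G^+_{\text{bad}}$ join only equal or adjacent layers, all of these neighbors lie in $L^i_{j^*}\cup L^i_{j^*+1}\cup L^i_{j^*+2}$, so $u$'s total degree in $G^+_{\text{bad}}$ is at most $3\cdot 16\sqrt{n}=48\sqrt{n}$. This single degree bound dominates the total number of bad edges $u$ can ever lose, across all subsequent iterations, with no need to control how many times $u$ sits on a boundary. You should replace your Group~A analysis with this first-loss degree argument.
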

%
%
\noindent We are now ready to prove the main result, Theorem \ref{thrm:sqrtMMD}.
\begin{proof}[of Theorem \ref{thrm:sqrtMMD}]
We prove that Algorithm \ref{alg:MMD_General} achieves an approximation of $49\sqrt{n}$.
The proof considers edges according to their type: $(1)$ $E^+_0$ and $E^-_{\text{bad}}$ edges, $(2)$ $E^+_{\text{bad}}\setminus E^+_0$ edges, and $(3)$ all other edges.
It is worth noting that the contribution of edges of type $(2)$ is bounded using the combinatorial lower bound, \ie, $\Cmax$, whereas the contribution of edges of type $(3)$ is bounded using the LP, \ie, $D(u)$ for every node $u\in V$ (as defined by the relaxation (\ref{Relaxation:Disagreements})).

First, consider edges of type $(1)$.
Lemma \ref{lem:CorrectEdges} implies Algorithm \ref{alg:MMD_General} does not make any mistakes with respect to these edges, thus their contribution to the value of the output $\mathcal{C}$ is always $0$.
Second, consider edges of type $(2)$.
Lemma \ref{lem:Short+Edges} implies that every node $u$ has at most $48\sqrt{n}$ edges of type $(2)$ incident on it that are classified incorrectly.
Additionally, the weight of every edge of type $(2)$ is at most $\Cmax$ since $ E^+_{\text{heavy}}\subseteq E^+_0$ and edges of type $(2)$ do not contain any edge of $E^+_0$.
Thus, we can conclude that for every node $u$ the total weight of edges of type $(2)$ that touch $u$ and are misclassified is at most $48\sqrt{n}\cdot \Cmax$.

Finally, consider edges of type $(3)$.
Fix an arbitrary node $u$ and let $D(u)$ be the fractional disagreement value the LP assigned to $u$ (see (\ref{Relaxation:Disagreements})).
Edge $e$ of type $(3)$ is either an edge $e\in E^+$ whose $d$ length is at least $\nicefrac[]{1}{\sqrt{n}}$, or an edge $e\in E^-$ whose $d$ length is at most $1-\nicefrac[]{1}{\sqrt{n}}$.
Hence, in any case the fractional contribution of such an edge $e$ to $D(u)$ is at least $\nicefrac[]{c_e}{\sqrt{n}}$.
Therefore, regardless of what the output is, the total weight of misclassified edges of type $(3)$ incident on $u$ is at most $\sqrt{n}\cdot D(u)$.

Summing over all types of edges, we can conclude that the total weight of misclassified edges incident on $u$ in $\mathcal{C}$ (the output of Algorithm \ref{alg:MMD_General}) is at most $48\sqrt{n}\Cmax + \sqrt{n}\cdot D(u)$.
Since both $\Cmax$ and $D(u)$ are lower bounds on the value of an optimal solution, the proof is concluded. \hfill $\square$ 
\end{proof}


\subsection{Min Local Disagreements on Complete Graphs}
We consider a simple deterministic greedy clustering algorithm for complete graphs that iteratively partitions the graph.
In every step it does the following: $(1)$ greedily chooses a center node $s^*$ that has many nodes {\em close} to it, and $(2)$ removes from the graph a sphere around $s^*$ which constitutes a new cluster. 
The greedy choice of $s^*$ is similar to that of \cite{puleo2016correlation}.
However, our algorithm departs from the approach of \cite{puleo2016correlation}, as it {\em always} cuts a large sphere around $s^*$.
The algorithm of \cite{puleo2016correlation}, on the other hand, outputs either a singleton cluster containing $s^*$ or some other large sphere around $s^*$ (the average distance within the large sphere determines which of the two options is chosen), thus mimicking the approach of \cite{charikar2003clustering}.
Surprisingly, restricting the algorithm's choice enables us not only to obtain a simpler algorithm, but also to improve upon the approximation guarantee from $48$ to $7$.

Algorithm \ref{alg:7ApproxClique} receives as input the metric $d$ as computed by the relaxation (\ref{Relaxation:Disagreements}), whereas the variables $D(u)$ are required only for the analysis.
Additionally, we denote by $\text{Ball}_S(u,r)\triangleq \left\{ v\in S:d(u,v)<r\right\}$ the sphere of radius $r$ around $u$ in subgraph $S$.

\begin{algorithm}
\caption{Greedy Clustering $( \{ d(u,v)\} _{u,v\in V})$}\label{alg:7ApproxClique}
\begin{algorithmic}[1]
\STATE $S\leftarrow V$ and $\mathcal{C}\leftarrow \emptyset$.
\WHILE {$S\neq \emptyset$}
\STATE $s^*\leftarrow \text{argmax}\left\{ \left| \text{Ball}_S(s,\nicefrac[]{1}{7})\right|:s\in S\right\}$.
\STATE $\mathcal{C} ~\leftarrow \mathcal{C} \cup \left\{ \text{Ball}_S(s^*,\nicefrac[]{3}{7})\right\}$.
\STATE $S~\leftarrow S\setminus \text{Ball}_S(s^*,\nicefrac[]{3}{7})$.
\ENDWHILE
\STATE Output $\mathcal{C}$.
\end{algorithmic}
\end{algorithm}
\noindent The following lemma summarizes the guarantee achieved by Algorithm \ref{alg:7ApproxClique} (its proof appears in Appendix \ref{app:7ApproxClique}, which also contains an overview of our charging scheme).
\begin{lemma}\label{lem:7ApproxClique}
Assuming the input is a complete graph, Algorithm \ref{alg:7ApproxClique} guarantees that $\text{disagree}_{\mathcal{C}}(u) \leq 7D(u)$ for every $u\in V$.
\end{lemma}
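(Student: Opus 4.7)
The plan is to fix an arbitrary vertex $u \in V$ and show $\text{disagree}_{\mathcal{C}}(u) \leq 7 D(u)$ by charging each disagreement edge incident on $u$ against the LP contribution $D(u) = \sum_v c_{u,v} L_{uv}$, where $L_{uv} = d(u,v)$ for $(u,v) \in E^+$ and $L_{uv} = 1 - d(u,v)$ for $(u,v) \in E^-$. Let $B$ be the cluster containing $u$ in $\mathcal{C}$ and let $s^*$ be its center.

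If $u = s^*$, the per-edge bounds come directly from the radius: any cut $+$ edge has $d(u,v) \geq 3/7$ and any intra-cluster $-$ edge has $1 - d(u,v) > 4/7$, giving factors $7/3$ and $7/4$ respectively, both well within $7$. So I will concentrate on $u \neq s^*$, where $d(s^*, u) < 3/7$. I would split the disagreement edges at $u$ into three categories: (A) $-$ edges $(u,v)$ with $v \in B$; (B) $+$ edges $(u,v)$ with $v \notin B$ and $d(u,v) \geq 1/7$; and (C) $+$ edges $(u,v)$ with $v \notin B$ and $d(u,v) < 1/7$. Categories (A) and (B) are immediate: for (A), the LP triangle inequality yields $d(u,v) \leq d(u, s^*) + d(s^*, v) < 6/7$, hence $L_{uv} > 1/7$, so $c_{u,v} \leq 7 c_{u,v} L_{uv}$; for (B), $L_{uv} = d(u,v) \geq 1/7$ directly gives the factor.

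The main obstacle is the bad category (C), where $L_{uv}$ can be arbitrarily small, so no per-edge charge works. The key structural observation is that the existence of such a bad edge forces $u$ close to the boundary of its ball. Indeed, if $v$ belongs to $S_t$ at the iteration $t$ in which $u$'s cluster was formed, then $d(s^*, v) \geq 3/7$ together with $d(u,v) < 1/7$ imply $d(s^*, u) > 2/7$; symmetrically, if $v$ lies in an earlier cluster $B_{t'}$ with center $s_{t'}$, then the same argument gives $d(s_{t'}, u) \in [3/7, 4/7)$. In either situation, the two balls $\text{Ball}_{\cdot}(u, 1/7)$ and $\text{Ball}_{\cdot}(s', 1/7)$ (with $s'$ the appropriate center) must be disjoint, since any common vertex would, by triangle inequality, bring $u$ and $s'$ within $2/7$ of each other.

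I would then invoke the greedy choice, which maximizes the size of the $1/7$-ball around the chosen center, to furnish an injection from the bad cut endpoints into the $1/7$-ball around that center. This produces, for every bad edge $(u,v)$, a distinct \emph{witness} $w$ lying in the appropriate cluster and satisfying $d(u,w) \in (1/7, 4/7)$ (same iteration) or $(2/7, 5/7)$ (earlier iteration). In either range the edge $(u,w)$ contributes mass at least $c_{u,w}/7$ to $D(u)$ regardless of its sign, and it in fact produces surplus LP mass beyond what is needed to handle $w$'s own participation (if any) in category (A) or (B). Because witnesses produced at different iterations lie in different clusters and are therefore automatically disjoint, the charges do not overlap. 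The hard part will be the bookkeeping in the weighted case: the greedy compares unweighted cardinalities, so turning the injection into a weighted charge that absorbs each bad edge's deficit against its witness' surplus requires a careful amortization that simultaneously exploits the slack already present in the factor-$7$ charges of (A) and (B).
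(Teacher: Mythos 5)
Your argument is essentially the paper's own proof: the same decomposition of the misclassified edges by ball regions, the same use of the greedy cardinality comparison to injectively match each bad cut endpoint lying in $\text{Ball}(u,\nicefrac[]{1}{7})$ to a distinct witness in the relevant center's $\nicefrac[]{1}{7}$-ball, and the same joint charging of a bad edge together with its witness against the witness edge's LP contribution (the paper merely organizes this per iteration rather than as a single global charging, and the resulting distance bounds and constants coincide with yours). Your closing worry about ``the weighted case'' is moot: the lemma concerns unweighted complete graphs --- the paper's charging scheme explicitly reduces $D(u)$ to $\sum_{v:(u,v)\in E^+} d(u,v) + \sum_{v:(u,v)\in E^-}\left(1-d(u,v)\right)$ --- and with unit weights the injection-based amortization closes exactly as you computed, with no further bookkeeping needed.
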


\begin{proof}[of Theorem \ref{thrm:7ApproxCliqueMLD}]
Apply Algorithm \ref{alg:7ApproxClique} to the solution of the relaxation (\ref{Relaxation:Disagreements}).
Lemma \ref{lem:7ApproxClique} guarantees that for every node $u\in V$ we have that $ \text{disagree}_{\mathcal{C}}(u) \leq 7D(u)$, \ie, $ \text{disagree}_{\mathcal{C}}(V)\leq 7\bD$.
The value of the output of the algorithm is $f\left( \text{disagree}_{\mathcal{C}}(V)\right)$ and one can bound it as follows:
$$ f\left( \text{disagree}_{\mathcal{C}}(V)\right)\stackrel{(1)}{\leq} f\left( 7\bD\right) \stackrel{(2)}{\leq} 7f\left( \bD\right)~.$$
Inequality $(1)$ follows from the monotonicity of $f$, whereas inequality $(2)$ follows from the scaling property of $f$.
This concludes the proof since $f\left( \bD\right) $ is a lower bound on the value of any optimal solution. \hfill $\square$
\end{proof}

\subsection{Min Local Disagreements on Complete Bipartite Graphs}
Our algorithm for {\MLD} on complete bipartite graphs (with one sided disagreements) is a natural extension of Algorithm \ref{alg:7ApproxClique}.
Similarly to the complete graph case, we are able to present a remarkably simple algorithm achieving an improved approximation of $7$.
The description of the algorithm and the proof of Theorem \ref{thrm:7ApproxBipartiteMLD} appear in Appendix \ref{app:7Bipartite}.

\section{Local Maximization of Agreements}\label{sec:MaxMinAgree}

As previously mentioned, {\MLA} is closely related to the computation of local optima for {\MC} and pure Nash equilibria in cut and party affiliation games, both of which are PLS-complete problems.
We focus on the special case of {\MMA}.

The natural local search algorithm for {\MMA} can be defined similarly to that of {\MC}:
it maintains a single cut $S\subseteq V$; a node $u$ moves to the other side of the cut if the move increases the total weight of correctly classified edges incident on $u$.
This algorithm terminates in a local optimum that is a $\left(\nicefrac[]{1}{2}\right)$-approximation for {\MMA}.
Unfortunately, it is known that such a local search algorithm 
can take exponential time,
even for {\MC}.

When considering {\MC}, this can be remedied by altering the local search step as follows: a node $u$ moves to the other side of the cut $S$ if 
the move increases the total weight of edges crossing $S$ 
by a multiplicative factor of at least $(1+\varepsilon)$ (for some $\varepsilon > 0$).
This approach {\em fails} for the computation of (approximate) pure Nash equilibria in party affiliation games, as well as for {\MMA}.
The reason is that both of these problems have {\em local} requirements from nodes, as opposed to the {\em global} objective of {\MC}.
Thus, not surprisingly, the current best known $\nicefrac[]{1}{(4+\varepsilon)}$-approximation for {\MMA} follows from \cite{bhalgat2010approximating} who present the state of the art algorithm for finding approximate pure Nash equilibria in party affiliation games.

We propose a direct approach for approximating {\MMA} that circumvents the need to compute approximate pure Nash equilibria in party affiliation games.
We improve upon the $\nicefrac[]{1}{(4+\varepsilon)}$-approximation by considering a {\em non-oblivious} local search that is executed with altered edge weights.
We are able to change the edges' weights in such a way that: $(1)$ any local optimum is a $\nicefrac[]{1}{(2+\varepsilon)}$-approximation, and $(2)$ the local search performs at most $O(\nicefrac[]{n}{\varepsilon})$ iterations.
The proof of Theorem \ref{thrm:ApproxMMA} appears in Appendix \ref{app:ApproxMMA}, along with some intuition for our non-oblivious local search algorithm.
Additionally, we prove that the natural LP and SDP relaxations for {\MMA} on general graphs admit an integrality gap of $\frac{n}{2(n-1)}$ (Theorem \ref{thrm:IntegralityGapMMA}).
This appears in Appendix \ref{app:IntegralityGapMMA}.

\bibliography{refs}

\appendix
\newpage

\section{Proof of Theorem \ref{thrm:IntegralityGapMMD}}\label{app:GapMMD}
\begin{proof}[of Theorem \ref{thrm:IntegralityGapMMD}]
Let $G$ be the unweighted cycle on $n$ vertices, where all edges are labeled $+$ and one edge is labeled $-$.
Specifically, denote the vertices of $G$ by $\left\{ v_1,v_2,\ldots,v_n\right\}$ where there is an edge $(v_i,v_{i+1})\in E^+$ for every $i=1,\ldots,n-1$ and additionally the edge $(v_n,v_1)\in E^-$.

First, we prove that the value of any integral solution is at least $1$.
A clustering that includes $V$ as a single cluster has value of $1$, as both $v_1$ and $v_n$ have exactly one misclassified edge touching them.
Moreover, one can easily verify that any clustering into two or more clusters has a value of at least $1$.
Thus, any integral solution for the above instance has value of at least $1$.

For simplicity of presentation let us re-state here the LP relaxation (\ref{Relaxation:Disagreements}) of {\MMD}:
\begin{align*}
\min ~~~ & \max _{u\in V}\left\{ D(u)\right\} & \\
& \sum _{v:(u,v)\in E^+}c_{u,v}d\left( u,v\right) + \sum _{v:(u,v)\in E^-} c_{u,v}\left( 1-d\left( u,v\right) \right) = D(u) & \forall u\in V \\
& d(u,v) + d(v,w) \geq d(u,w) & \forall u,v,w\in V \\
& D(u)\geq 0, ~0\leq d(u,v) \leq 1 & \forall u,v\in V
\end{align*}
Let us construct a fractional solution.
Assign a length of $\nicefrac[]{1}{n}$ for every $+$ edge and a length of $1-\nicefrac[]{1}{n}$ for the single $-$ edge, and let $d$ be the shortest path metric in $G$ induced by these lengths.
Obviously, the triangle inequality is satisfied and one can verify that $d(u,v)\leq 1$ for all $u,v\in V$.
Consider a vertex $v_i$ that does not touch the $-$ edge, \ie, $i=2,\ldots,n-1$.
Such a $v_i$ has two + edges touching it both having a length of $\nicefrac[]{1}{n}$, hence $D(v_i)=\nicefrac[]{2}{n}$.
Focusing on $v_1$ and $v_n$, each has one $+$ edge whose length is $\nicefrac[]{1}{n}$ and one $-$ edge whose length is $1-\nicefrac[]{1}{n}$ touching it.
Hence, $D(v_1)=D(v_n)=\nicefrac[]{2}{n}$.
Therefore, the above instance has an integrality gap of $\nicefrac[]{n}{2}$.

Now, consider the natural semi-definite relaxation for {\MMD}, where each vertex $u$ corresponds to a unit vector $\by _u$.
Intuitively, if $S_1,\ldots, S_{\ell}$ is an integral clustering, then all vertices in cluster $S_j$ are assigned to the standard $j$\textsuperscript{th} unit vector, \ie, $\be _j$.
Hence, the natural semi-definite relaxation requires that all vectors lie in the same orthant, \ie, for every $u$ and $v$: $\by _u \cdot \by _v\geq 0$, and that $\{ \by _u\} _{u\in V}$ satisfy the $\ell _2^2$ triangle inequality.
Therefore, the natural semi-definite relaxation is:
\begin{align*}
\min ~~~ & \max _{u\in V}\left\{ D(u)\right\} & \\
& \sum _{v:(u,v)\in E^+}c_{u,v}\left( 1-\by _u \cdot \by _v\right) + \sum _{v:(u,v)\in E^-} c_{u,v}\left(\by _u \cdot \by _v\right) = D(u) & \forall u\in V \\
& || \by _u - \by _v||_2^2 + || \by _v - \by _w||_2^2 \geq || \by _u - \by _w||_2^2 & \forall u,v,w\in V \\
& \by _u \cdot \by _u = 1 & \forall u\in V \\
& \by _u \cdot \by _v \geq 0 & \forall u,v\in V
\end{align*}

In order to construct a fractional solution, it will be helpful to consider $Y\in \mathcal{R}^{V\times V}$ the positive semi-definite matrix of all inner products of $\left\{ \by _{v_i}\right\} _{i=1}^n$, \ie, $Y_{v_i,v_j}=\by _{v_i}\cdot \by _{v_j}$.
Intuitively, we consider a collection of integral solutions where for each one we construct the corresponding $Y$ matrix.
At the end, our fractional solution will be the average of all these $Y$ matrices.

Consider the following $n-1$ integral solutions, each having only two clusters, where the first cluster consists of $\left\{ v_1,\ldots,v_i\right\}$ and the second consists of $\left\{ v_{i+1},\ldots,v_n\right\}$ (here $i=1,\ldots,n-1$).
Fixing $i$ and using the above translation of an integral solution to a feasible solution for the semi-definite relaxation, we assign each $v_j$, where $j=1,\ldots,i$ to $\be _1$ and each $v_j$, where $j=i+1,\ldots,n$, to $\be _2$.
Let $Y^i$ be the resulting (positive semi-definite) inner product matrix.
Additionally, consider one additional integral solution that consists of a single cluster containing all of $V$.
In this case, the above translation yields that all $v_i$ vectors are assigned to $\be _1$.
Denote by $Y^n$ the resulting (positive semi-definite) inner product matrix.
Clearly, each of the $Y^1,\ldots,Y^n$ defines a feasible solution for the above natural semi-definite relaxation.

Our fractional solution is given by the average of all the above inner product matrices: $\overline{Y}\triangleq \frac{1}{n}\sum _{i=1}^n Y^i$.
Obviously, $\overline{Y}$ defines a feasible solution for the above natural semi-definite relaxation.
Note that  $\by _{v_1} \cdot \by _{v_n} = \frac{n-1}{n}\cdot 0 + \frac{1}{n}\cdot 1 = \frac{1}{n}$ and that $\by _{v_i} \cdot \by _{v_{i+1}} =\frac{n-1}{n}\cdot 1 + \frac{1}{n}\cdot 0 = \frac{n-1}{n}$, for every $i=1,\ldots,n-1$.
Therefore, we can conclude that:
\begin{align*}
D(v_i) & = 2\left( 1-\frac{n-1}{n}\right)=\frac{2}{n} &\forall i=2,\ldots,n-1 \\
D(v_1) & =D(v_n) =\left( 1-\frac{n-1}{n}\right) + \frac{1}{n}=\frac{2}{n} &
\end{align*}
This demonstrates that the above instance also has an integrality gap of $\nicefrac[]{n}{2}$ for the natural semi-definite relaxation. \hfill $\square$
\end{proof}

\section{Proof of Lemma \ref{lem:LongPath}}\label{app:LongPath}
\begin{proof}[of Lemma \ref{lem:LongPath}]
If $s_i$ and $t_i$ are not in the same connected component of $G^+_{\text{bad}}$ then $\text{dist}_{\ell}(s_i,t_i)=\infty$.
Otherwise, let $P$ be a path connecting $s_i$ and $t_i$ in $G^+_{\text{bad}}$.
Note that $\sum _{e=(u,v)\in P}d(u,v)\geq d(s_i,t_i)>1-\nicefrac[]{1}{\sqrt{n}}$, where the first inequality follows from the triangle inequality for $d$ and the second inequality from the fact that $(s_i,t_i)\in E^-_{\text{bad}}$, \ie, $d(s_i,t_i)>1-\nicefrac[]{1}{\sqrt{n}}$.

Let us now lower bound the number of edges in $P$ that belong to $ E^+_{\text{bad}}\setminus E^+_0$, \ie, edges $e$ for which $\ell(e)=1$.
Examine $\sum _{e=(u,v)\in P}d(u,v)$ and note that every edge $e=(u,v)\in E^+_0$ has a $d$ length of $0$.
Hence, we can remove those edges from the sum and conclude that: $ \sum _{e=(u,v)\in P\setminus E^+_0}d(u,v)>1-\nicefrac[]{1}{\sqrt{n}}$.
Recall that $G^+_{\text{bad}}$ contains only edges from $E^+_{\text{bad}}$, thus every $e=(u,v)\in P\setminus E^+_0$ satisfies: $d(u,v)<\nicefrac[]{1}{\sqrt{n}}$.
This implies that $P\setminus E^+_0$ must contain more than $ \sqrt{n}-1$ edges, \ie, $\text{dist}_{\ell}(s_i,t_i)>\sqrt{n}-1$, concluding the proof. \hfill $\square$
\end{proof}

\section{Proof of Lemma \ref{lem:BadLayers}}\label{app:BadLayers}
\begin{proof}[of Lemma \ref{lem:BadLayers}]
Let $x$ be the number of layers $L^i_j$ for which $|L^i_j|> 16\sqrt{n}$.
Since the total number of vertices in all layers, for a fixed $i$, cannot exceed $n$, we can conclude that $x\leq \nicefrac[]{n}{(16\sqrt{n})}=\nicefrac[]{\sqrt{n}}{16}$. \hfill $\square$
\end{proof}

\section{Proof of Corollary \ref{cor:ChoosingLayer}}\label{app:ChoosingLayer}
\begin{proof}[of Corollary \ref{cor:ChoosingLayer}]
We will prove that for any connected component $X$ of $G^+_{\text{bad}}$, such that both $\left\{ s_i,t_i\right\}$ belong to $X$, there are $3$ consecutive layers as required by Algorithm \ref{alg:MMD_General}.\footnote{This implies that Algorithm \ref{alg:MMD_General} can always find $j^*$ as required since any connected component $X$ can only shrink as the algorithm progresses.}
Lemma \ref{lem:BadLayers} implies that there are at most $\nicefrac[]{\sqrt{n}}{16}$ layers whose size is more than $ 16\sqrt{n}$.
Therefore, the number of layers among  $L^i_0,\ldots,L^i_{\nicefrac[]{(\sqrt{n}-1)}{2}}$ whose size is at most $16\sqrt{n}$ is at least:
$ \nicefrac[]{(\sqrt{n}-1)}{2} - \nicefrac[]{\sqrt{n}}{16}$.
The latter is at least $\nicefrac[]{3}{4}\cdot\nicefrac[]{(\sqrt{n}-1)}{2} $ (as long as $n\geq 4$).
Thus, there must be at least $3$ consecutive layers among $L^i_0,\ldots,L^i_{\nicefrac[]{(\sqrt{n}-1)}{2}}$, each having a size of at most $16\sqrt{n}$. \hfill $\square$
\end{proof}

\section{Proof of Lemma \ref{lem:CorrectEdges}}\label{app:CorrectEdges}
\begin{proof}[of Lemma \ref{lem:CorrectEdges}]
Let us start by focusing on edges in $E^+_0$.
Since $E^+_0\subseteq E^+_{\text{bad}}$, all edges of $ E^+_0$ are present in $G^+_{\text{bad}}$ by definition, and in particular both endpoints of every $e\in E^+_0$ are contained in the same connected component $X$ of $G^+_{\text{bad}}$.
Furthermore, for any $i$ such that both $\left\{ s_i,t_i\right\}$ are contained in $X$, both endpoints $e$ must also be contained in the same layer $L^i_j$ (for some $j$).
This follows from the fact that $\ell(e)=0$ for all edges $ e\in E^+_0$ and the definition of all layers $L^i_0,\ldots,L^i_{r_i}$.
Thus, both endpoints of every edge $e\in E^+_0$ are always in the same cluster $S$ in the output $ \mathcal{C}$, \ie, such an edge $e$ is never misclassified.

Let us now focus on edges in $E^-_{\text{bad}} $, and recall that our notation implies that $E^-_{\text{bad}}=\left\{ (s_i,t_i)\right\}_{i=1}^k$.
We prove that $(s_i,t_i)$ cannot be contained in some cluster $S\in \mathcal{C}$.
Note that each cluster $S$ is in fact a sphere of radius at most $\nicefrac[]{(\sqrt{n}-1)}{2}$ with respect to the metric $\text{dist}_{\ell}$.
Lemma \ref{lem:LongPath} states that $\text{dist}_{\ell}(s_i,t_i)>\sqrt{n}-1$, hence the triangle inequality for $\text{dist}_{\ell}$ implies that both $s_i$ and $t_i$ cannot be contained in the same cluster $S$.
Therefore, each $(s_i,t_i)$ edge is never misclassified. \hfill $\square$
\end{proof}

\section{Proof of Lemma \ref{lem:Short+Edges}}\label{app:Short+Edges}
\begin{proof}[of Lemma \ref{lem:Short+Edges}]
Fix $u$, $S$ the cluster Algorithm \ref{alg:MMD_General} assigned $u$ to, and $X$ the connected component of $G^+_{\text{bad}}$ defining $S$.
Consider the first iteration an edge $e=(u,v)\in E^+_{\text{bad}}\setminus E^+_0$ touching $u$ is misclassified by the algorithm.
Let $i$ correspond to the pair $\left\{ s_i,t_i\right\}$ considered in the above iteration and $j^*$ the index by which Algorithm \ref{alg:MMD_General} defined the cluster in the same iteration.

If the above occurs in an iteration where $S$ itself is added to $\mathcal{C}$, then it must be the case that $u\in L^i_{j^*}$ and $v\in L^i_{j^*+1}$.
Additionally, no other edges in $ E^+_{\text{bad}}\setminus E^+_0$ touching $u$ can be misclassified in subsequent iteration.
Therefore, in this case the total number of edges in $E^+_{\text{bad}}\setminus E^+_0$ touching $u$ that are misclassified can be upper bounded by $|L^i_{j^*+1}|\leq 16\sqrt{n}$.

Otherwise, the first iteration an edge $e=(u,v)\in E^+_{\text{bad}}\setminus E^+_0$ touching $u$ is misclassified by the algorithm is not the iteration in which $S$ itself is added to $\mathcal{C}$.
Thus, since the algorithm cuts between layers $L^i_{j^*}$ and $L^i_{j^*+1}$, it must be the case that $u\in L^i_{j^*+1}$.
Since edges in $G^+_{\text{bad}}$ can connect only vertices in the same or adjacent layers, the total degree of $u$ in $G^+_{\text{bad}}$ is at most $|L^i_{j^*}|+|L^i_{j^*+1}|+|L^i_{j^*+2}|-1$.
From the choice of $j^*$ the latter can be upper bounded by $48\sqrt{n}$. \hfill $\square$
\end{proof}

\section{Proof of Lemma \ref{lem:7ApproxClique}}\label{app:7ApproxClique}
Let us start with some intuition as to why $s^*$ is chosen greedily.
One of the goals of the analysis is to bound the contribution of $+$ edges crossing the boundary of the sphere around $s^*$.
Since those edges might have an extremely small fractional contribution w.r.t. the metric, \ie, their $d$ length is very short, we must charge their cost to other edges.
The fact that there are many vertices close to $s^*$, along with the fact that the graph is complete, implies that there are many other edges present within the sphere, or crossing its boundary, that we can charge to.

\vspace{5pt}
\noindent {\bf{Charging Scheme Overview:}}
Fix an arbitrary node $u\in V$.
In order to bound the number of misclassified edges incident on $u$, our analysis tracks two quantities of interest.
The first is the total number of edges incident on $u$ that are classified incorrectly by the algorithm. 
 Recall that this quantity is denoted by $\text{disagree}_{\mathcal{C}}(u)$, and we refer to it as $u$'s {\em cost}.
The second is the total fractional disagreement of node $u$ as given by the relaxation, \ie, $D(u)$.
We refer to $D(u)$ as $u$'s {\em budget}.
Since we consider unweighted complete graphs $D(u)$ reduces to:
$ D(u) = \sum _{v:(u,v)\in E^+}d\left( u,v\right) + \sum _{v:(u,v)\in E^-} \left( 1-d\left( u,v\right) \right)$.

Note that both $u$'s cost and budget are fixed.
However, it will be conceptually helpful to view these quantities as changing as the algorithm progresses.
Initially: $(1)$ $u$'s cost is $0$ as no edge has been classified yet, \ie, $ \text{disagree}_{\mathcal{C}}(u)=0$ once Algorithm \ref{alg:7ApproxClique} starts, and  $(2)$ $u$'s budget is full, \ie, $D(u) = \sum _{v:(u,v)\in E^+}d\left( u,v\right) + \sum _{v:(u,v)\in E^-} \left( 1-d\left( u,v\right) \right)$ once Algorithm \ref{alg:7ApproxClique} starts.
In every iteration $u$'s cost increases by the number of newly misclassified edges incident on $u$, and $u$'s budget decreases by the total fractional contribution of all newly classified edges incident on $u$ (whether correct or not).
Our analysis bounds the ratio of these two changes in each iteration of Algorithm \ref{alg:7ApproxClique}.

\begin{figure}
    \centering
    \begin{subfigure}[b]{0.49\textwidth}
        \includegraphics[width=\textwidth]{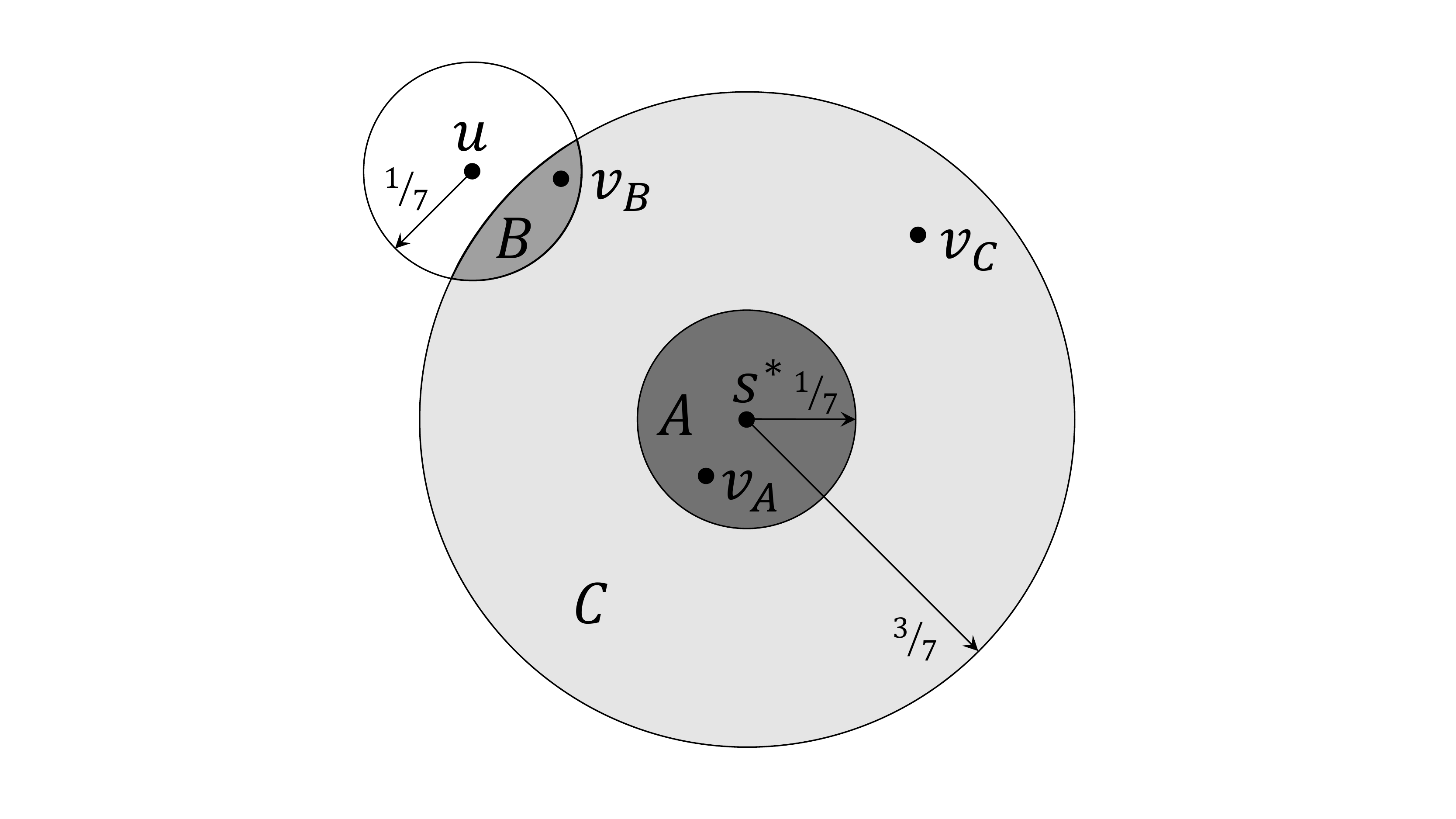}
        \caption{Case $1$: $u\notin \text{Ball}_S(s^*,\nicefrac[]{3}{7})$}
        \label{fig:Case1}
    \end{subfigure}
    \begin{subfigure}[b]{0.49\textwidth}
        \includegraphics[width=\textwidth]{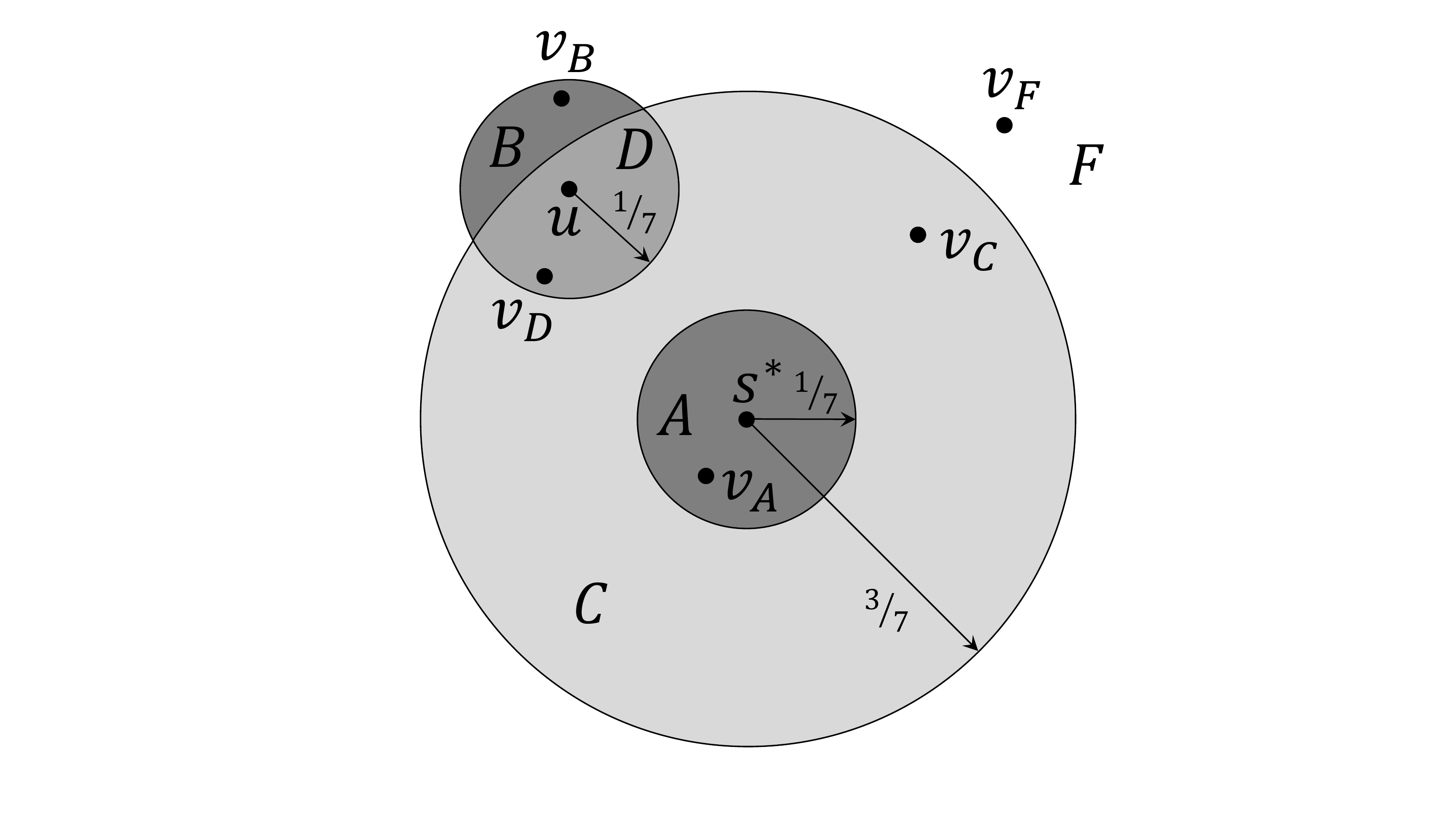}
        \caption{Case $2$: $u\in \text{Ball}_S(s^*,\nicefrac[]{3}{7})$}
        \label{fig:Case2}
    \end{subfigure}
    \caption{Cases of Lemmas \ref{lem:7ApproxClique} and \ref{lem:7ApproxBipartite} analysis.}\label{fig:Cases}
\end{figure}

\begin{proof}[of Lemma \ref{lem:7ApproxClique}] 
Fix a vertex $u$ and an arbitrary iteration.
Consider two cases depending on whether $u$ belongs to the cluster formed in the chosen iteration.
It is important to note that once $u$ is assigned to a cluster that is added to $\mathcal{C}$, its value, \ie, $\text{disagree}_{\mathcal{C}}(u)$, does not change in subsequent iterations and remains fixed until the algorithm terminates.

\vspace{5pt}
\noindent {\bf{Case $1$: $u\notin \text{Ball}_S(s^*,\nicefrac[]{3}{7})$:}}
Note that the only edges incident on $u$ that are classified incorrectly in the current iteration, are edges $(u,v)\in E^+$ for some node $v\in \text{Ball}_S(s^*,\nicefrac[]{3}{7})$.
Let us define the following disjoint collections of vertices: $A\triangleq \text{Ball}_S(s^*,\nicefrac[]{1}{7})$, $B\triangleq \text{Ball}_S(s^*,\nicefrac[]{3}{7})\cap \text{Ball}_S(u,\nicefrac[]{1}{7})$, and $C\triangleq \text{Ball}_S(s^*,\nicefrac[]{3}{7})\setminus \left( A \cup B\right)$.
Refer to Figure \ref{fig:Case1} for a drawing of $A$, $B$ and $C$.
Thus, the erroneously classified edges are $(u,v_A)\in E^+$ where $v_A\in A$, $(u,v_B)\in E^+$ where $v_B\in B$, and $(u,v_C)\in E^+$ where $v_C\in C$.


First, let us focus on edges $(u,v_C)\in E^+$.
Each edge $(u,v_C)$ increases $u$'s cost by $1$.
We charge this increase to the fractional contribution of $(u,v_C)$ to $u$'s budget, which equals $d(u,v_C)$.
Since $v_C\notin \text{Ball}_S(u,\nicefrac[]{1}{7})$ it must be the case that $d(u,v_C)\geq \nicefrac[]{1}{7}$.
Therefore, each $(u,v_C)$ edge incurs a multiplicative loss of at most $7$.

Let us focus now on edges $(u,v_B)\in E^+$ and $(u,v_A)$ simultaneously.
Since $s^*$ was chosen greedily, \ie, it maximizes the number of nodes within distance less than $\nicefrac[]{1}{7}$ from it, we can conclude that $|B| \leq |A|$.
Thus, each node in $B$ can be assigned to a {\em distinct} node in $A$.
Fix $v_B\in B$ and let $v_A\in A$ be the node assigned to it.

\begin{enumerate}
\item If $(u,v_A)\in E^+$ then the {\em joint} contribution of $(u,v_B)$ and $(u,v_A)$ to $u$'s cost is $2$.
We charge this cost to the fractional contribution of $(u,v_A)$ alone to $u$'s budget, which equals $d(u,v_A)$.
The triangle inequality implies that $d(u,v_A)\geq d(u,s^*) - d(s^*,v_A)\geq \nicefrac[]{3}{7} - \nicefrac[]{1}{7}= \nicefrac[]{2}{7}$.
Hence, both $(u,v_B)$ and $(u,v_A)$ incur a multiplicative loss of at most $\nicefrac[]{2}{\left(\nicefrac[]{2}{7}\right)}=7$.

\item If $(u,v_A)\in E^-$ then $(u,v_A)$ does not increase $u$'s cost, and therefore the increase in $u$'s cost is caused solely by $(u,v_B)$ and it equals $1$.
We charge this cost to the fractional contribution of $(u,v_A)$ alone to $u$'s budget, which equals $1-d(u,v_A)$.
The triangle inequality implies that $d(u,v_A)\leq d(u,v_B) + d(v_B,s^*)+ d(s^*,v_A)\leq \nicefrac[]{1}{7}+\nicefrac[]{3}{7}+\nicefrac[]{1}{7} = \nicefrac[]{5}{7}$.
Hence, $(u,v_B)$ incurs a multiplicative loss of at most $\frac{1}{1-\nicefrac[]{5}{7}}=\nicefrac[]{7}{2}$.

\item If there are any remaining nodes $v_A\in A$ that no node in $B$ was assigned to them, and $(u,v_A)\in E^+$, then we charge the $1$ cost $ (u,v_A)$ adds to $u$'s cost to the fractional contribution of $(u,v_A)$ to $u$'s budget, which equals $d(u,v_A)$.
The triangle inequality implies that $d(u,v_A)\geq d(u,s^*) - d(s^*,v_A)\geq \nicefrac[]{3}{7} - \nicefrac[]{1}{7}= \nicefrac[]{2}{7}$.
Hence, such an edge $(u,v_A)$ incurs a multiplicative loss of at most $\nicefrac[]{1}{\left(\nicefrac[]{2}{7}\right)}=\nicefrac[]{7}{2}$.
\end{enumerate}

\noindent Thus, we can conclude that for the first case in which $u\notin \text{Ball}_S(s^*,\nicefrac[]{3}{7})$ we lose a factor of at most $7$.

\vspace{5pt}
\noindent {\bf Case $2$: $u\in \text{Ball}_S(s^*,\nicefrac[]{3}{7})$:}
Note that the only edges incident on $u$ that are classified incorrectly in the current iteration, are edges $(u,v)\in E^+$ for some $v\notin \text{Ball}_S(s^*,\nicefrac[]{3}{7})$ and edges $(u,v)\in E^-$ for some $v\in \text{Ball}_S(s^*,\nicefrac[]{3}{7})$.
Let us define the following disjoint collections of vertices: $A\triangleq \text{Ball}_S(s^*,\nicefrac[]{1}{7})$, $B\triangleq \text{Ball}_S(u,\nicefrac[]{1}{7})\setminus \text{Ball}_S(s^*,\nicefrac[]{3}{7})$, $C\triangleq \text{Ball}_S(s^*,\nicefrac[]{3}{7})\setminus \left( A \cup \text{Ball}_S(u,\nicefrac[]{1}{7})\right)$, $D\triangleq \text{Ball}_S(u,\nicefrac[]{1}{7})\cap \text{Ball}_S(s^*,\nicefrac[]{3}{7})$, and $F\triangleq S \setminus (A\cup B \cup C \cup D)$. Refer to Figure \ref{fig:Case2} for a drawing of $A$, $B$, $C$, $D$, and $F$.
Thus, the erroneously classified edges are $(u,v_A)\in E^-$ where $v_A\in A$, $(u,v_B)\in E^+$ where $v_B\in B$, $(u,v_C)\in E^-$ where $v_C\in C$, $(u,v_D)\in E^-$ where $v_D\in D$, and $(u,v_F)\in E^+$ where $v_F\in F$.

Let us focus now on edges $(u,v_B)\in E^+$ and $(u,v_A)$ simultaneously.
Since $s^*$ was chosen greedily, \ie, it maximizes the number of nodes within distance less than $\nicefrac[]{1}{7}$ from it, we can conclude that $|B| \leq |A|$.
Thus, each node in $B$ is assigned to a {\em distinct} node in $A$.
Fix $v_B\in B$ and let $v_A\in A$ be the node assigned to it.
\begin{enumerate}
\item If $(u,v_A)\in E^+$ then $(u,v_A)$ does not increase $u$'s cost, and therefore the increase in $u$'s cost is caused solely by $(u,v_B)$ and it equals $1$.
We charge this cost to the fractional contribution of $(u,v_A)$ alone to $u$'s budget, which equals $d(u,v_A)$.
The triangle inequality implies that $d(v_A,u)\geq d(s^*,v_B) -  d(s^*,v_A) - d(u,v_B)\geq \nicefrac[]{3}{7}-\nicefrac[]{1}{7}-\nicefrac[]{1}{7} = \nicefrac[]{1}{7}$.
Hence, $(u,v_B)$ incurs a multiplicative loss of at most $\nicefrac[]{1}{\left(\nicefrac[]{1}{7}\right)}=7$.

\item If $(u,v_A)\in E^-$ then the {\em joint} contribution of $(u,v_B)$ and $(u,v_A)$ to $u$'s cost is $2$.
We charge this cost to the fractional contribution of $(u,v_A)$ alone to $u$'s budget, which equals $1-d(u,v_A)$.
The triangle inequality implies that $d(u,v_A)\leq d(u,s^*) + d(s^*,v_A)\leq \nicefrac[]{3}{7} + \nicefrac[]{1}{7}= \nicefrac[]{4}{7}$.
Hence, both $(u,v_B)$ and $(u,v_A)$ incur a multiplicative loss of at most $\nicefrac[]{2}{(1-\nicefrac[]{4}{7})}=\nicefrac[]{14}{3}$.

\item If there are any remaining nodes $v_A\in A$ that no node in $B$ was assigned to them, and that $(u,v_A)\in E^-$, we charge the $1$ cost $ (u,v_A)$ adds to $u$'s cost to the fractional contribution of $(u,v_A)$ to $u$'s budget, which equals $1-d(u,v_A)$.
As before, the triangle inequality implies that $d(u,v_A)\leq d(u,s^*) + d(s^*,v_A)\leq \nicefrac[]{3}{7} + \nicefrac[]{1}{7}= \nicefrac[]{4}{7}$.
Hence, such an edge $(u,v_A)$ incurs a multiplicative loss of at most $\nicefrac[]{1}{(1-\nicefrac[]{4}{7})}=\nicefrac[]{7}{3}$.
\end{enumerate}

Let us now focus on edges $(u,v_C)\in E^-$ and $(u,v_D)\in E^-$.
For simplicity, let us denote such an edge by $(u,v)$ where $v\in C\cup D$.
Each such edge increases $u$'s cost by $1$.
We charge this increase to the fractional contribution of the same edge to $u$'s budget, which equals $1-d(u,v)$.
Since both $u,v\in \text{Ball}_S(s^*,\nicefrac[]{3}{7})$ the triangle inequality implies that $d(u,v)\leq d(u,s^*) + d(s^*,v)\leq \nicefrac[]{6}{7}$.
Therefore, each $(u,v)\in E^-$, where $ v\in C\cup D$, incurs a multiplicative loss of at most $\nicefrac[]{1}{(1-\nicefrac[]{6}{7})}=7$.

Finally, consider edges $(u,v_F)\in E^+$.
Each such edge increases $u$'s cost by 1.
We charge this increase to the fractional contribution of the same edge to $u$'s budget, which equals $d(u,v_F)$.
Since $d(u,v_F) \geq \nicefrac[]{1}{7}$, each such edge incurs a multiplicative loss of at most $\nicefrac[]{1}{\left(\nicefrac[]{1}{7}\right)}=7$.
This concludes the proof as we have shown that for every vertex $u$ and every iteration, the increase in $u$'s cost during the iteration as it most $7$ times the decrease in $u$'s budget during the same iteration. \hfill $\square$
\end{proof}

\section{Proof of Theorem \ref{thrm:7ApproxBipartiteMLD}}\label{app:7Bipartite}

Let $G=(V,E)$ be an unweighted complete bipartite graph. Let $V_1$ and $V_2$ be the two sides of the graph G. Our algorithm will ensure a $7$ approximation factor for mistakes on all vertices in $V_1$ but does not give any guarantee for vertices in $V_2$. The algorithm is a slight modification of the Algorithm \ref{alg:7ApproxClique} presented earlier.

We consider the following simple deterministic greedy clustering algorithm.
Algorithm \ref{alg:7ApproxBipartite} receives as input the metric $d$ (as computed by the relaxation (\ref{Relaxation:Disagreements})), whereas the variables $D(u)$ are required only for the analysis.
In every step, the algorithm greedily chooses a vertex $s^* \in V_1$ that has many vertices in $V_2$ {\em close} to it with respect to the metric $d$.
Then, $s^*$ just cuts a {\em large} sphere around it to form a new cluster.
\begin{algorithm}
\caption{Greedy Clustering $\left( \left\{ d(u,v)\right\} _{u,v\in V}\right)$}\label{alg:7ApproxBipartite}
\begin{algorithmic}[1]
\STATE $S\leftarrow V$ and $\mathcal{C}\leftarrow \emptyset$.
\WHILE {$S\cap V_1\neq \emptyset$}
\STATE $s^*\leftarrow \text{argmax}\left\{ \left| \text{Ball}_{V_2}(s,\nicefrac[]{1}{7})\right|:s\in V_1\right\}$.
\STATE $\mathcal{C} ~\leftarrow \mathcal{C} \cup \left\{ \text{Ball}_S(s^*,\nicefrac[]{3}{7})\right\}$.
\STATE $S~\leftarrow S\setminus \text{Ball}_S(s^*,\nicefrac[]{3}{7})$.
\ENDWHILE
\WHILE {$S\neq \emptyset$}
\STATE $s^*\leftarrow s^* \in S $.
\STATE $\mathcal{C} ~\leftarrow \mathcal{C} \cup \left\{ s^* \right\}$.
\STATE $S~\leftarrow S\setminus s^*$.
\ENDWHILE
\STATE Output $\mathcal{C}$.
\end{algorithmic}
\end{algorithm}

The following lemma summarizes the guarantee achieved by Algorithm \ref{alg:7ApproxBipartite}.
\begin{lemma}\label{lem:7ApproxBipartite}
Assuming the input is an unweighted complete bipartite graph, Algorithm \ref{alg:7ApproxBipartite} guarantees that $\text{disagree}_{\mathcal{C}}(u) \leq 7D(u)$ for any $u\in V_1$.
\end{lemma}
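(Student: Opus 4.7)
The plan is to adapt the two-case charging scheme from the proof of Lemma \ref{lem:7ApproxClique} to the bipartite setting. Fix $u \in V_1$. Since the graph is bipartite, every neighbor of $u$ lies in $V_2$, so only $V_2$-vertices contribute to $\text{disagree}_{\mathcal{C}}(u)$ and to $D(u)$. Moreover, the first \texttt{while} loop of Algorithm \ref{alg:7ApproxBipartite} runs as long as $S \cap V_1 \neq \emptyset$, so $u$ is placed into some cluster $\text{Ball}_S(s^*_u, \nicefrac[]{3}{7})$ during that loop; the subsequent singleton loop only separates leftover $V_2$-vertices from $u$'s cluster, and any such separation is already captured by the iteration in which $u$ was clustered (the leftover $V_2$-vertex was in $S$ and outside $\text{Ball}_S(s^*_u, \nicefrac[]{3}{7})$ at that moment, so it falls into the ``far away'' region of the case analysis below).

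Following the proof of Lemma \ref{lem:7ApproxClique}, I would track $u$'s \emph{cost}, i.e.\ the number of currently misclassified edges incident on $u$, and $u$'s \emph{budget} $D(u)$ iteration by iteration, and show that in every iteration of the first loop the cost increment is at most $7$ times the budget decrement. In each iteration I split into two cases depending on whether $u \in \text{Ball}_S(s^*, \nicefrac[]{3}{7})$. The partitioning sets used in the complete-graph proof, namely $A$, $B$, $C$ for Case $1$ and $A$, $B$, $C$, $D$, $F$ for Case $2$, would be defined as before but restricted to $V_2$: for instance $A = \text{Ball}_{V_2}(s^*, \nicefrac[]{1}{7})$ and $B = \text{Ball}_{V_2}(s^*, \nicefrac[]{3}{7}) \cap \text{Ball}_{V_2}(u, \nicefrac[]{1}{7})$. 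Because $u$'s edges only reach $V_2$, these sets cover every edge that must be charged.

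The key step, and the one I expect to be the main obstacle, is re-establishing the injective mapping $B \hookrightarrow A$ in the bipartite setting. This is precisely where the modified greedy rule in Algorithm \ref{alg:7ApproxBipartite} becomes essential: $s^*$ maximizes $|\text{Ball}_{V_2}(s, \nicefrac[]{1}{7})|$ over $s \in V_1 \cap S$, and since $u \in V_1$ is a candidate, we obtain $|B| \leq |\text{Ball}_{V_2}(u, \nicefrac[]{1}{7})| \leq |\text{Ball}_{V_2}(s^*, \nicefrac[]{1}{7})| = |A|$, exactly paralleling the inequality used in the complete-graph proof. Once this mapping is in place, the triangle-inequality arguments of the three sub-cases in each of Cases $1$ and $2$ transfer almost verbatim: each misclassified $+$ edge is charged either against its own $d(u,\cdot)$ length or paired with a $V_2$-vertex in $A$ whose fractional contribution to $D(u)$ is sufficiently large, and analogously for $-$ edges. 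Verifying that the worst sub-case loss is still $7$ amounts to re-running the same distance computations as in the complete-graph proof. Summing over all iterations in which a new edge incident on $u$ becomes misclassified yields $\text{disagree}_{\mathcal{C}}(u) \leq 7 D(u)$ for every $u \in V_1$, as required.
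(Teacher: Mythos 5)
Your proposal is correct and follows essentially the same route as the paper's proof: the same iteration-by-iteration cost/budget charging, the same two cases, the same sets $A,B,C$ (resp.\ $A,B,C,D,F$) restricted to $V_2$, and the same key observation that the modified greedy rule (maximizing $\left|\text{Ball}_{V_2}(s,\nicefrac[]{1}{7})\right|$ over $s\in V_1$) yields $|B|\leq|A|$ since $u\in V_1$ is itself a candidate, after which the triangle-inequality sub-cases carry over verbatim. Your handling of the second (singleton) loop is also consistent with the paper's observation that all edges incident on $u$ are already classified in the iteration in which $u$ is clustered.
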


\noindent {\bf{Charging Scheme Overview:}}
Fix an arbitrary vertex $u\in V_1$.
As before, we track two quantities: $u$'s cost and $u$'s budget.
Our analysis bounds the ratio of the change in these two quantities in each iteration of Algorithm \ref{alg:7ApproxBipartite}.
%

\begin{proof}[of Lemma \ref{lem:7ApproxBipartite}]
Fix a vertex $u \in V_1$ and an arbitrary iteration.
We consider two cases depending on whether $u$ belongs to the cluster formed in the chosen iteration.
It is important to note that once $u$ is chosen to a cluster that is added to $\mathcal{C}$, its value, \ie, $\text{disagree}_{\mathcal{C}}(u)$, does not change and remains fixed until the algorithm terminates.

\vspace{5pt}
\noindent {\bf Case $1$: $u\notin \text{Ball}_S(s^*,\nicefrac[]{3}{7}), u \in V_1$:}
Note that the only edges incident on $u$ that are classified incorrectly in the
current iteration, are edges $(u,v)\in E^+$ for some $v\in \text{Ball}_S(s^*,\nicefrac[]{3}{7}) \cap V_2$.
Define the following disjoint collections of vertices: $A\triangleq \text{Ball}_S(s^*,\nicefrac[]{1}{7}) \cap V_2$, $B\triangleq \text{Ball}_S(s^*,\nicefrac[]{3}{7})\cap \text{Ball}_S(u,\nicefrac[]{1}{7}) \cap V_2$, and $C\triangleq (\text{Ball}_S(s^*,\nicefrac[]{3}{7}) \cap V_2)\setminus \left( A \cup B\right)$. Refer to Figure \ref{fig:Case1} for a drawing of $A$, $B$ and $C$.
Thus, the erroneously classified edges are $(u,v_A)\in E^+$ where $v_A\in A$, $(u,v_B)\in E^+$ where $v_B\in B$, and $(u,v_C)\in E^+$ where $v_C\in C$. Note that whenever there is an edge $(u,v)$ where $u \in V_1$, $v$ must belong to $V_2$ since G is a bipartite graph.

Let us focus on edges $(u,v_C)\in E^+$.
Each edge $(u,v_C)$ increases $u$'s cost by $1$.
We charge this increase to the fractional contribution of $(u,v_C)$ to $u$'s budget, which equals $d(u,v_C)$.
Since $v_C\notin \text{Ball}_S(u,\nicefrac[]{1}{7})$ it must be the case that $d(u,v_C)\geq \nicefrac[]{1}{7}$.
Therefore, each $(u,v_C)$ edge incurs a multiplicative loss of at most $7$.

Let us focus now on edges $(u,v_A)$ and $(u,v_B) \in E^+$ simultaneously.
Since $s^*$ was chosen greedily, \ie, it maximizes the number of nodes $\in V_2$ within distance of at most $\nicefrac[]{1}{7}$ from it, we can conclude that $|B| \leq |A|$.
Thus, each node in $B$ can be assigned to a {\em distinct} node in $A$.
Fix $v_B\in B$ and let $v_A\in A$ be the node assigned to it.
\begin{enumerate}
\item If $(u,v_A)\in E^+$ then the {\em joint} contribution of $(u,v_B)$ and $(u,v_A)$ to $u$'s cost is $2$.
We charge this cost to the fractional contribution of $(u,v_A)$ alone to $u$'s budget, which equals $d(u,v_A)$.
The triangle inequality implies that $d(u,v_A)\geq d(u,s^*) - d(s^*,v_A)\geq \nicefrac[]{3}{7} - \nicefrac[]{1}{7}= \nicefrac[]{2}{7}$.
Hence, both $(u,v_B)$ and $(u,v_A)$ incur a multiplicative loss of at most $\nicefrac[]{2}{\left(\nicefrac[]{2}{7}\right)}=7$.
\item If $(u,v_A)\in E^-$ then $(u,v_A)$ does not increase $u$'s cost, and therefore the increase in $u$'s cost is caused solely by $(u,v_B)$ and it equals $1$.
    We charge this cost to the fractional contribution of $(u,v_A)$ alone to $u$'s budget, which equals $1-d(u,v_A)$.
    The triangle inequality implies that $d(u,v_A)\leq d(u,v_B) + d(v_B,s^*)+ d(s^*,v_A)\leq \nicefrac[]{1}{7}+\nicefrac[]{3}{7}+\nicefrac[]{1}{7} = \nicefrac[]{5}{7}$.
    Hence, $(u,v_B)$ incurs a multiplicative loss of at most $\frac{1}{1-\nicefrac[]{5}{7}}=\nicefrac[]{7}{2}$.
\item If there are any remaining nodes $v_A\in A$ such no node in $B$ was assigned to them, and $(u,v_A)\in E^+$, we charge the $1$ cost $ (u,v_A)$ adds to $u$'s cost to the fractional contribution of $(u,v_A)$ to $u$'s budget, which equals $d(u,v_A)$.
    The triangle inequality implies that $d(u,v_A)\geq d(u,s^*) - d(s^*,v_A)\geq \nicefrac[]{3}{7} - \nicefrac[]{1}{7}= \nicefrac[]{2}{7}$.
    Hence, such an edge $(u,v_A)$ incurs a multiplicative loss of at most $\nicefrac[]{1}{\left(\nicefrac[]{2}{7}\right)}=\nicefrac[]{7}{2}$.
\end{enumerate}
We can conclude that the first case in which $u\notin \text{Ball}_S(s^*,\nicefrac[]{3}{7})$ we lose a factor of at most $7$.

\vspace{5pt}
\noindent {\bf Case $2$:  $u\in \text{Ball}_S(s^*,\nicefrac[]{3}{7}) \cap V_1$: }
Note that the only edges incident on $u$ that are classified incorrectly in the current iteration, are edges $(u,v)\in E^+$ for some $v\notin \text{Ball}_S(s^*,\nicefrac[]{3}{7}), v \in V_2$ and edges $(u,v)\in E^-$ for some $v\in \text{Ball}_S(s^*,\nicefrac[]{3}{7}) \cap V_2$.
Define the following disjoint collections of vertices: $A\triangleq \text{Ball}_S(s^*,\nicefrac[]{1}{7}) \cap V_2$, $B\triangleq (\text{Ball}_S(u,\nicefrac[]{1}{7}) \cap V_2)\setminus \text{Ball}_S(s^*,\nicefrac[]{3}{7})$, $C\triangleq (\text{Ball}_S(s^*,\nicefrac[]{3}{7}) \cap V_2)\setminus \left( A \cup \text{Ball}_S(u,\nicefrac[]{1}{7})\right)$, $D\triangleq \text{Ball}_S(u,\nicefrac[]{1}{7})\cap \text{Ball}_S(s^*,\nicefrac[]{3}{7}) \cap V_2$ and $F\triangleq (V_2 \cup S) \setminus (A\cup B \cup C \cup D)$. Refer to Figure \ref{fig:Case2} for a drawing of $A$, $B$, $C$, $D$ and $F$.
Thus, the erroneously classified edges are $(u,v_A)\in E^-$ where $v_A\in A$, $(u,v_B)\in E^+$ where $v_B\in B$, $(u,v_C)\in E^-$ where $v_C\in C$, $(u,v_D)\in E^-$ where $v_D\in D$ and $(u,v_F)\in E^+$ where $v_F\in F$.

Let us focus now on edges $(u,v_A)$ and $(u,v_B) \in E^+$ simultaneously.
Since $s^*$ was chosen greedily, \ie, it maximizes the number of nodes $\in V_2$ within distance of at most $\nicefrac[]{1}{7}$ from it, we can conclude that $|B| \leq |A|$.
Thus, each node in $B$ can be assigned to a {\em distinct} node in $A$.
Fix $v_B\in B$ and let $v_A\in A$ be the node assigned to it.
\begin{enumerate}
\item If $(u,v_A)\in E^-$ then the {\em joint} contribution of $(u,v_B)$ and $(u,v_A)$ to $u$'s cost is $2$.
We charge this cost to the fractional contribution of $(u,v_A)$ alone to $u$'s budget, which equals $1-d(u,v_A)$.
The triangle inequality implies that $d(u,v_A)\leq d(u,s^*) + d(s^*,v_A)\leq \nicefrac[]{3}{7} + \nicefrac[]{1}{7}= \nicefrac[]{4}{7}$.
Hence, both $(u,v_B)$ and $(u,v_A)$ incur a multiplicative loss of at most $\nicefrac[]{2}{\left(1-\nicefrac[]{4}{7}\right)}=\nicefrac[]{14}{3}$.
\item If $(u,v_A)\in E^+$ then $(u,v_A)$ does not increase $u$'s cost, and therefore the increase in $u$'s cost is caused solely by $(u,v_B)$ and it equals $1$.
    We charge this cost to the fractional contribution of $(u,v_A)$ alone to $u$'s budget, which equals $d(u,v_A)$.
    The triangle inequality implies that $d(u,v_A)\geq d(u,s^*) - d(v_A,s^*) \geq d(v_B,s^*) - d(v_B,u) - d(v_A,s^*)\geq \nicefrac[]{3}{7}-\nicefrac[]{1}{7}-\nicefrac[]{1}{7} = \nicefrac[]{1}{7}$.
    Hence, $(u,v_B)$ incurs a multiplicative loss of at most $\nicefrac[]{1}{\left(\nicefrac[]{1}{7}\right)}=7$.
\item If there are any remaining nodes $v_A\in A$ such that no node in $B$ was assigned to them, and $(u,v_A)\in E^-$, then $u$'s cost increases by 1. We charge this cost to the fractional contribution of $(u,v_A)$ to $u$'s budget, which equals $1-d(u,v_A)$.
    The triangle inequality implies that $d(u,v_A)\leq d(u,s^*) + d(s^*,v_A)\leq \nicefrac[]{3}{7} + \nicefrac[]{1}{7}= \nicefrac[]{4}{7}$.
    Hence, such an edge $(u,v_A)$ incurs a multiplicative loss of at most $\nicefrac[]{1}{\left(1-\nicefrac[]{4}{7}\right)}=\nicefrac[]{7}{3}$.
\end{enumerate}

Let us focus on edges $(u,v_C)\in E^-$, and $(u,v_D)\in E^-$.
For simplicity, let us denote such an edge by $(u,v)$ where $v\in C\cup D$.
Each such edge increases $u$'s cost by $1$.
We charge this increase to the fractional contribution of the same edge to $u$'s budget, which equals $1-d(u,v)$.
Since both $u,v\in \text{Ball}_S(s^*,\nicefrac[]{3}{7})$ it must be the case from the triangle inequality that $d(u,v)\leq d(u,s^*) + d(s^*,v)\leq \nicefrac[]{6}{7}$.
Therefore, each $(u,v)\in E^-$, where $ v\in C\cup D$, incurs a multiplicative loss of at most $\nicefrac[]{1}{(1-\nicefrac[]{6}{7})}=7$.

Now, let us consider edges $(u,v_F)\in E^+$. Each such edge increases $u$'s cost by 1. We charge this increase to the fractional contribution of the same edge to $u$'s budget, which equals $d(u,v_F)$. Since $d(u,v_F) \geq \nicefrac[]{1}{7}$. Hence, each such edge incurs a multiplicative loss of at most $\nicefrac[]{1}{\nicefrac[]{1}{7}}=7$. This concludes the proof as we have shown that for every vertex $u \in V_1$ and every iteration, the increase in $u$'s cost during the iteration as it most 7 times the decrease in $u$'s budget during the same iteration. \hfill $\square$
\end{proof}

\noindent We are now ready to prove Theorem \ref{thrm:7ApproxBipartiteMLD}.

\begin{proof}[of Theorem \ref{thrm:7ApproxBipartiteMLD}]
Apply Algorithm \ref{alg:7ApproxBipartite} to the solution of the relaxation (\ref{Relaxation:Disagreements}).
Lemma \ref{lem:7ApproxBipartite} guarantees that for every node $u\in V_1$ we have that $ \text{disagree}_{\mathcal{C}}(u) \leq 7D(u)$, \ie, $ \text{disagree}_{\mathcal{C}}(V_1)\leq 7\bD$.\footnote{For simplicity we denote here by $\bD$ the vector of $D(u)$ variables for vertices in $ V_1$.}
The value of the output of the algorithm is $f\left( \text{disagree}_{\mathcal{C}}(V_1)\right)$ and one can bound it as follows:
$$ f\left( \text{disagree}_{\mathcal{C}}(V_1)\right)\stackrel{(1)}{\leq} f\left( 7\bD\right) \stackrel{(2)}{\leq} 7f\left( \bD\right)~.$$
Inequality $(1)$ follows from the monotonicity of $f$, whereas inequality $(2)$ follows from the scaling property of $f$.
This concludes the proof since $f\left( \bD\right) $ is a lower bound on the value of any optimal solution.\hfill $\square$
\end{proof}

\section{Proof of Theorem \ref{thrm:ApproxMMA}}\label{app:ApproxMMA}
For completeness and intuition, we start with an exposition on the simple local search algorithm for {\MLA}.
Let us denote by $c(u)$ the total weight of edges incident on $u$, and by $\bc\in \mathcal{R}^V$ the vector of all $\{c(u)\}_{u\in V}$.
Additionally, for any cut $S\subseteq V$ we denote by $\mathcal{C}_S=\{ S,\bar{S}\}$ the clustering $S$ defines.
The simple local search algorithm starts with an arbitrary cut $S$, and repeatedly moves vertices from one side to the other until no additional improvement can be made.
Specifically, if $\text{agree}_{\mathcal{C}_S}(u) < \nicefrac[]{c(u)}{2} $ then node $u$ is moved to the other side of the cut.
%

When the algorithm terminates it must be that for every node $u$: $\text{agree}_{\mathcal{C}_S}(u) \geq \nicefrac[]{c(u)}{2}$.
The latter implies that $\mathcal{C}_S$ is a $\nicefrac[]{1}{2}$-approximation for {\MLA} since:
$$ g(\text{agree}_{\mathcal{C}_S}(V))\stackrel{(1)}{\geq} g(\nicefrac[]{\bc}{2}) \stackrel{(2)}{\geq} \nicefrac[]{g(\bc)}{2}~.$$
Inequality $(1)$ follows from the monotonicity of $g$, whereas inequality $(2)$ follows from the reverse scaling property of $g$.
Note that $g(\bc)$ upper bounds the value of any optimal solution to {\MLA}.

One can track the progress of the algorithm by considering the potential $\Phi_{\mathcal{C}_S} \triangleq \sum _{u\in V}\text{agree}_{\mathcal{C}_S}(u)$.
For every node $u$ and cut $S$ note that: $(1)$ $\text{agree}_{\mathcal{C}_S}(u) + \text{disagree}_{\mathcal{C}_S}(u) = c(u)$, and $(2)$
if $u$ is moved to the other side of the cut then the values of $ \text{agree}_{\mathcal{C}_S}(u)$ and $\text{disagree}_{\mathcal{C}_S}(u)$ are swapped.
Thus, the potential $ \Phi_{\mathcal{C}_S}$ must strictly increase in every iteration, implying that the algorithm always terminates.
Unfortunately, it is well known that in general the local search algorithm might terminate after an exponential number of iterations.


For {\MMA} we are able to show that a {\em non-oblivious} local search succeeds in finding a $\nicefrac[]{1}{(2+\varepsilon)}$-approximation in polynomial time, thus matching the existential guarantee of the simple local search algorithm.
Our main idea is to alter the edge weights in such a way that in the new resulting instance the ratio of $\max _{S\subseteq V}\left\{ \Phi_{\mathcal{C}_S}\right\}$ to the value of the optimal solution for {\MMA} is polynomially bounded.
Once this is guaranteed, the local search algorithm can be altered so it always terminates in polynomial time.
We are now ready to prove Theorem \ref{thrm:ApproxMMA}.

\begin{proof}[of Theorem \ref{thrm:ApproxMMA}]
First we describe the process of creating the new edge weights.
Let $c^*\triangleq \min _{u\in V}\left\{ c(u)\right\}$ be the minimum total weight of edges incident on any node.
Clearly, $c^*$ serves as an upper bound on the value of any optimal solution for {\MMA}.

Denote by $T\triangleq \left\{ u\in V:c(u)=c^*\right\}$ the collection of all nodes whose total weight of edges incident on each of them is $c^*$.
We are going to describe a process that only decreases edge weights, and we prove that once it terminates all the following are true: $(1)$ $c^*$ does not decrease, $(2)$ $T$ still contains exactly all nodes whose total weight of edges incident on each of them is $c^*$, and $(3)$ $E(\overline{T})=\emptyset$, \ie, there is no edge $(u,v)$ such that both $u,v\notin T$.

The process is defined as follows.
While there is an edge $(u,v)\in E(\overline{T})$, \ie, both $u,v\notin T$, whose weight is $c_{u,v}>0$, decrease its weight until the first of the following happens: $c_{u,v}$ reaches $0$ (in which case we remove the edge), $c(u)$ reaches $c^*$ (in which case we add $u$ to $T$ and stop decreasing the weight of the edge), or $c(v)$ reaches $c^*$ (in which case we add $v$ to $T$ and stop decreasing the weight of the edge).
Clearly this process terminates in polynomial time, and $(1)$, $(2)$, and $(3)$ above are all satisfied (see Figure \ref{fig:NoEdgesT}).

\begin{figure}
    \centering
    \includegraphics[width=0.75\textwidth]{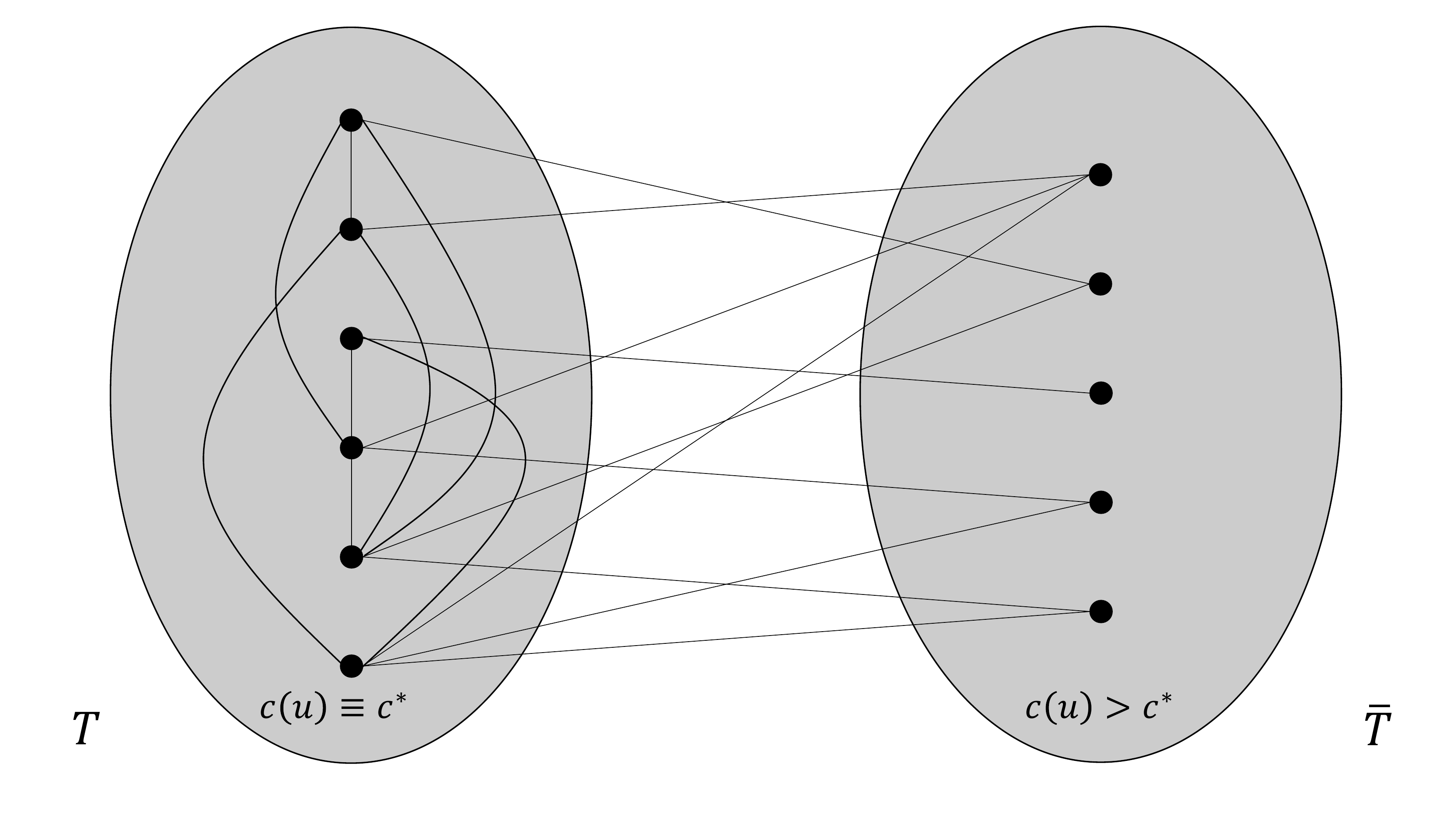}
    \caption{No edges in $E(\overline{T})$ after creating new weights.}
    \label{fig:NoEdgesT}
\end{figure}

We now execute the following modified local search algorithm on the new graph $G$ and edge weight function.
Its full description is given by Algorithm \ref{alg:MMALocalSearch}.
\begin{algorithm}
\caption{Non-Oblivious Local Search ($G=(V,E),c^*,\varepsilon$)}\label{alg:MMALocalSearch}
\begin{algorithmic}[1]
\STATE $i\leftarrow 0$ and choose an arbitrary $S_0\subseteq V$.
\WHILE {$\exists u \in V \text{such that } \text{agree}_{\mathcal{C}_{S_i}}(u) < (\nicefrac[]{1}{2}-\varepsilon)c^*$}
\STATE $\text{move }u \text{ to the other side of the cut } S_i$ and denote the resulting cut by $S_{i+1}$.
\STATE $i\leftarrow i + 1$.
\ENDWHILE
\STATE output $ \mathcal{C}_{S_i}$.
\end{algorithmic}
\end{algorithm}

Clearly, once Algorithm \ref{alg:MMALocalSearch} terminates: $\text{agree}_{\mathcal{C}_{S_i}}(u)\geq (\nicefrac[]{1}{2}-\varepsilon)c^*$ for every node $ u\in V$.
Thus, the output is a $\nicefrac[]{1}{(2+\varepsilon')}$-approximation for {\MMA}, where $\varepsilon'=\nicefrac[]{4\varepsilon}{(1-2\varepsilon)}$.
All that remains is to prove that Algorithm \ref{alg:MMALocalSearch} terminates after a polynomial number of iterations.

For any $S\subseteq V$ define the potential $\Phi_{\mathcal{C}_S} \triangleq \sum _{u\in V}\text{agree}_{\mathcal{C}_S}(u)$ as before.
Note that:
\begin{align}
\max _{S\subseteq V}\left\{ \Phi_{\mathcal{C}_S}\right\} \stackrel{(1)}{\leq} 2\sum _{e\in E}c(e) \stackrel{(2)}{\leq} 2\sum _{u\in T}c(u) \stackrel{(3)}{\leq} 2n c^*~.\label{PotentialBound}
\end{align}
Inequality $(1)$ follows from the observation that $\text{agree}_{\mathcal{C}_S}(u)\leq c(u)$ for every $u\in V$, and thus the total potential can never exceed twice the total weight of edges in the graph.
Inequality $(2)$ follows from the fact that $E(\overline{T})=\emptyset$, whereas inequality $(3)$ follows from the definition of $T$ and the fact that $ |T|\leq n$.
Therefore, we can conclude that the potential $\Phi_{\mathcal{C}_S}$ is upper bounded by $2nc^*$.

Now we claim that in every iteration of Algorithm \ref{alg:MMALocalSearch} the potential $ \Phi_{\mathcal{C}_S}$ must increase by at least $2\varepsilon c^*$.
Fix an iteration $i$ and let $u$ be the node that was moved in this iteration.
Note that:
\begin{align}
\Phi _{\mathcal{C}_{S_{i+1}}} - \Phi _{\mathcal{C}_{S_{i}}} & \stackrel{(4)}{=} 2\left( \text{agree}_{\mathcal{C}_{S_{i+1}}}(u) - \text{agree}_{\mathcal{C}_{S_{i}}}(u)\right) \nonumber\\
& \stackrel{(5)}{=} 2\left( c(u) - 2 \cdot \text{agree}_{\mathcal{C}_{S_{i}}}(u)\right) \nonumber\\
& \stackrel{(6)}{\geq} 4\varepsilon c^*~.\label{PotentialGain}
\end{align}
Equality $(4)$ follows from the definition of the potential.
Since it is always the case that $\text{agree}_{\mathcal{C}_{S_{i}}}(u)+\text{disagree}_{\mathcal{C}_{S_{i}}}(u) =c(u)$, and the values of $\text{agree}_{\mathcal{C}_{S_{i}}}(u)$ and $\text{disagree}_{\mathcal{C}_{S_{i}}}(u)$ are swapped once $u$ is moved, \ie, $ \text{agree}_{\mathcal{C}_{S_{i+1}}}(u)=\text{disagree}_{\mathcal{C}_{S_{i}}}(u)$, we can conclude that equality $(5)$ is true.
Inequality $(6)$ holds since $c(u)\geq c^*$ and the reason $u$ was moved in iteration $i$ is that $\text{agree}_{\mathcal{C}_{S_{i}}}(u)<\left( \nicefrac[]{1}{2}-\varepsilon\right) c^*$.
Combining (\ref{PotentialBound}) and (\ref{PotentialGain}) proves that Algorithm \ref{alg:MMALocalSearch} terminates after at most $\nicefrac[]{n}{(2\varepsilon)}$ iterations. \hfill $\square$
\end{proof}

\section{Proof of Theorem \ref{thrm:IntegralityGapMMA}}\label{app:IntegralityGapMMA}
We prove now that the same integrality gap example used in proving Theorem \ref{thrm:IntegralityGapMMD} applies also for the current Theorem \ref{thrm:IntegralityGapMMA}.

\begin{proof}[of Theorem \ref{thrm:IntegralityGapMMA}]
Let $G$ be the unweighted cycle on $n$ vertices, where all edges are labeled $+$ and one edge is labeled $-$.
Specifically, denote the vertices of $G$ by $\left\{ v_1,v_2,\ldots,v_n\right\}$ where there is an edge $(v_i,v_{i+1})\in E^+$ for every $i=1,\ldots,n-1$ and additionally the edge $(v_n,v_1)\in E^-$.

First, we prove that the value of any integral solution is at most $1$.
A clustering that includes $V$ as a single cluster has value of $1$, as both $v_1$ and $v_n$ have exactly one correctly classified edge touching them i.e. 1 agreement.
Moreover, one can easily verify that any clustering into two or more clusters has a value of at most $1$.
Thus, any integral solution for the above instance has value of at most $1$.

Consider the natural linear programming relaxation for {\MMA}:
\begin{align*}
\max ~~~ & \min _{u\in V}\left\{ A(u)\right\} & \\
& \sum _{v:(u,v)\in E^+}c_{u,v}(1-d\left( u,v\right)) + \sum _{v:(u,v)\in E^-} c_{u,v}d\left( u,v \right) = A(u) & \forall u\in V \\
& d(u,v) + d(v,w) \geq d(u,w) & \forall u,v,w\in V \\
& A(u)\geq 0, ~0\leq d(u,v) \leq 1 & \forall u,v\in V
\end{align*}
Let us construct a fractional solution.
Assign a length of $\nicefrac[]{1}{n}$ for every $+$ edge and a length of $1-\nicefrac[]{1}{n}$ for the single $-$ edge, and let $d$ be the shortest path metric in $G$ induced by these lengths.
Obviously, the triangle inequality is satisfied and one can verify that $d(u,v)\leq 1$ for all $u,v\in V$.
Consider a vertex $v_i$ that does not touch the $-$ edge, \ie, $i=2,\ldots,n-1$.
Such a $v_i$ has two + edges touching it both having a length of $\nicefrac[]{1}{n}$, hence $A(v_i)=2-\nicefrac[]{2}{n}$.
Focusing on $v_1$ and $v_n$, each has one $+$ edge whose length is $\nicefrac[]{1}{n}$ and one $-$ edge whose length is $1-\nicefrac[]{1}{n}$ touching them.
Hence, $A(v_1)=A(v_n)=2-\nicefrac[]{2}{n}$.
Therefore, the above instance has an integrality gap of $\nicefrac[]{n}{(2(n-1))}$.

Now, consider the natural semi-definite relaxation for {\MMA}, where each vertex $u$ corresponds to a unit vector $\by _u$.
Intuitively, if $S_1,\ldots, S_{\ell}$ is an integral clustering, then all vertices in cluster $S_j$ are assigned to the standard $j$\textsuperscript{th} unit vector, \ie, $\be _j$.
Hence, the natural semi-definite relaxation requires that all vectors lie in the same orthant, \ie, for every $u$ and $v$: $\by _u \cdot \by _v\geq 0$, and that $\left\{ \by _u\right\} _{u\in V}$ satisfy the $\ell _2^2$ triangle inequality.
\begin{align*}
\max ~~~ & \min _{u\in V}\left\{ A(u)\right\} & \\
& \sum _{v:(u,v)\in E^+}c_{u,v}\left( \by _u \cdot \by _v\right) + \sum _{v:(u,v)\in E^-} c_{u,v}\left(1-\by _u \cdot \by _v\right) = A(u) & \forall u\in V \\
& || \by _u - \by _v||_2^2 + || \by _v - \by _w||_2^2 \geq || \by _u - \by _w||_2^2 & \forall u,v,w\in V \\
& \by _u \cdot \by _u = 1 & \forall u\in V \\
& \by _u \cdot \by _v \geq 0 & \forall u,v\in V
\end{align*}

In order to construct a fractional solution, it will be helpful to consider $Y\in \mathcal{R}^{V\times V}$ the positive semi-definite matrix of all inner products of $\left\{ \by _{v_i}\right\} _{i=1}^n$, \ie, $Y_{v_i,v_j}=\by _{v_i}\cdot \by _{v_j}$.
Intuitively, we consider a collection of integral solutions where for each one we construct the corresponding $Y$ matrix.
At the end, our fractional solution will be the average of all these $Y$ matrices.

Consider the following $n-1$ integral solution, each having only two clusters, where the first cluster consists of $\left\{ v_1,\ldots,v_i\right\}$ and the second contains $\left\{ v_{i+1},\ldots,v_n\right\}$ (here $i=1,\ldots,n-1$).
Fixing $i$ and using the above translation of an integral solution to a feasible solution for the semi-definite relaxation, we assign each $v_j$, where $j=1,\ldots,i$ to $\be _1$ and each $v_j$, where $j=i+1,\ldots,n$, to $\be _2$.
Let $Y^i$ be the resulting (positive semi-definite) inner product matrix.
Additionally, consider one additional integral solution that consists of a single cluster containing all of $V$.
In this case, the above translation yields that all $v_i$ vectors are assigned to $\be _1$.
Denote by $Y^n$ the resulting (positive semi-definite) inner product matrix.
Clearly, each of the $Y^1,\ldots,Y^n$ defines a feasible solution for the above natural semi-definite relaxation.

Our fractional solution is given by the average of all the above inner product matrices: $\overline{Y}\triangleq \frac{1}{n}\sum _{i=1}^n Y^i$.
Obviously, $\overline{Y}$ defines a feasible solution for the above natural semi-definite relaxation.
Note that $\by _{v_1} \cdot \by _{v_n} = \frac{n-1}{n}\cdot 0 + \frac{1}{n}\cdot 1 = \frac{1}{n}$ and that $\by _{v_i} \cdot \by _{v_{i+1}}=\frac{n-1}{n}\cdot 1 + \frac{1}{n}\cdot 0 = \frac{n-1}{n}$, for every $i=1,\ldots,n-1$.
Therefore, we can conclude that:
\begin{align*}
A(v_i) & = 2\left(\frac{n-1}{n}\right)=2-\frac{2}{n} &\forall i=2,\ldots,n-1 \\
A(v_1) & =A(v_n) =1-\frac{1}{n}+\left(\frac{n-1}{n}\right)=2-\frac{2}{n}  &
\end{align*}
This demonstrates that the above instance also has an integrality gap of $\nicefrac[]{n}{(2(n-1))}$ for the natural semi-definite relaxation. \hfill $\square$
\end{proof}

\end{document}